\begin{document}

\title{Profitable Task Allocation in Mobile Cloud Computing}

\numberofauthors{3} 
%
\author{
%
%
\alignauthor
Mojgan Khaledi\\
      \affaddr{School of Computing}\\
       \affaddr{University of Utah}\\
      \email{mojgankh@cs.utah.edu}
\alignauthor
Mehrdad Khaledi\\
      \affaddr{ECSE}\\
       \affaddr{ RPI}\\
      \email{khalem@rpi.edu}
\alignauthor
Sneha Kumar Kasera\\
      \affaddr{School of Computing}\\
       \affaddr{University of Utah}\\
      \email{kasera@cs.utah.edu}}
\maketitle

\begin{abstract}

We propose a game theoretic framework for task allocation in mobile cloud computing that corresponds to offloading of compute tasks to a group of nearby mobile devices. Specifically, in our framework, a distributor node holds a multidimensional auction for allocating the tasks of a job among nearby mobile nodes based on their computational capabilities and also the cost of computation at these nodes, with the goal of reducing the overall job completion time. Our proposed auction also has the desired incentive compatibility property that ensures that mobile devices truthfully reveal their capabilities and costs and that those devices benefit from the task allocation. To deal with node mobility, we perform multiple auctions over adaptive time intervals. We develop a heuristic approach to dynamically find the best time intervals between auctions to minimize unnecessary auctions and the accompanying overheads. We evaluate our framework and methods using both real world and synthetic mobility traces. Our evaluation results show that our game theoretic framework improves the job completion time by a factor of 2-5 in comparison to the time taken for executing the job locally, while minimizing the number of auctions and the accompanying overheads. Our approach is also profitable for the nearby nodes that execute the distributor's tasks with these nodes receiving a compensation higher than their actual costs.

\end{abstract}


%

\section{Introduction}\label{secintro}

Personal mobile devices such as smartphones and tablets are increasingly being used for in our daily lives. Many more smartphones are being sold worldwide than the total sales of PCs~\cite{idc} and this growth in sales of smartphones is expected to continue in the future. Furthermore, various advances in technology are making these devices powerful tools capable of performing complex tasks, including speech to text conversion, audio identification, and image recognition.
However, despite these advances, resources on mobile devices are constrained by weight and size requirements of the device that must be met for the devices to be easily carried around. Therefore, mobile devices still have limited battery, storage, heat dissipation ability, etc which impede complex and resource intensive task execution. A possible remedy for tackling resource limitations of mobile devices is to offload the computational tasks to the cloud. 


Mobile Cloud Computing (MCC) generally refers to a client-server communication model where a mobile device (client) offloads computing tasks to the remote cloud through a wireless network (mainly cellular or WiFi networks)~\cite{cloud1, cloud2}. However, this model of mobile cloud computing is facing some important challenges. First, the performance of MCC is highly depend on wireless communication networks. With the tremendous growth of mobile data users, the wide-area mobile data access links (e.g., backhaul links in a cellular network) are becoming a bottleneck. This trend is expected to continue at even higher scales in the future because service providers, specifically cellular service providers, are unable to upgrade their backhaul networks due to shrinking profits. Thus, the MCC model can suffer from high latency and slow data transfer which may not be acceptable for the users of mobile applications. Second, although the cloud provides shared resources and amortizes costs, its operation requires establishment and maintenance of highly expensive hardware to run the high computational tasks. 

A possible remedy for tackling resource limitations of mobile devices when performing complex tasks, while not requiring the use of a server cloud and also minimizing latencies, is to offload computing tasks to nearby mobile devices. For example, in the speech to text conversion application, a mobile device can divide the audio file into smaller pieces, then assign each piece to a nearby device, and finally, combine the results obtained from nearby devices. Offloading compute tasks to nearby mobile devices rather than using a remote cloud through the mobile data cellular network lowers the latency and the burden on network backhaul. The nearby mobile devices, collectively and opportunistically, essentially provide the power of a cloud. Thus, we have another notion of mobile cloud computing that corresponds to offloading of compute tasks to a group of nearby mobile devices connected by various types of links including D2D, WiFi Direct, Bluetooth, etc. In this paper, we use this second notion of mobile cloud computing. 



There is a growing amount of work to utilize mobile device computing power for cloud computing. Hyrax~\cite{hyrax} uses the computational power of a network of android smartphones in MapReduce. Mobile Device Clouds~\cite{fahim1,fahim2} and Serendipity~\cite{mostafaamar} are platforms for opportunistic computing where a mobile device offloads computing tasks to nearby mobile devices. NativeBOINC for Android~\cite{mobileboinc} is another examples of utilizing mobile devices' computing power. Recently, Habak et al.~\cite{femtocloud} proposed FemtoCloud where a controller executes a variety of tasks arriving at controller by using the computational power of nearby mobile devices. SymbIoT~\cite{vision} is another platform that uses the computational capability of all mobile devices within the same network to perform different tasks. 



In this paper, we 
examine a scenario where a mobile device or a central controller, that we call a \textit{distributor} node, has a computational job or a set of different computational jobs and wants to utilize resources of nearby mobile nodes to reduce the job completion time. 
Due to mobility, the distributor (mobile device or a central controller) has frequent contacts with other mobile devices that can provide the required computational resources. The problem faced by the distributor is how to select the nearby nodes and divide the job among them in a manner that is beneficial to all the parties involved.

While the problem of using other nodes for executing the tasks of a job has been widely used in distributed computing~\cite{setihome ,boinc }, cyber foraging~\cite{cyberforaging1}
, and {\em crowdsourcing}~\cite{crowdsourcing1},
the existing work on task allocation cannot be simply adopted for mobile cloud computing. First, the task allocation method must take into account the selfish behavior of mobile nodes by providing incentives for them. This because in mobile cloud computing, a rational mobile node would not be willing to lend its resources (and thereby deplete its battery) unless it receives some payoff as compensation. Second, mobile devices can have different hardware/software and thus have different capabilities. For this reason, the execution time of a specific task can be different across mobile devices. Therefore, the task allocation mechanism must consider the heterogeneity of mobile devices to reduce the overall job completion time. Third, the task allocation needs to take into account the mobility of nodes. In a mobile environment, the distributor may observe disconnection of nodes with assigned tasks, and new arrivals that might provide high computational capabilities. Thus, decisions should be made according to the dynamics of the environment. Fourth, and very importantly, the distributor node should see a clear benefit in terms of job completion time. 


We propose a game theoretic framework for task allocation that provides incentive for all mobile nodes. In our framework, the distributor node holds a multidimensional auction for allocating the tasks of a job among nearby mobile nodes based on their computational capabilities and also the cost of computation at these nodes, with the goal of reducing the overall job completion time. To the best of our knowledge, this is the first work that presents a multidimensional auction for task allocation in mobile cloud computing. Our proposed auction also has desired economic properties (that we formally prove later in the paper) including {\em incentive compatibility} which ensures that players truthfully reveal their capabilities and costs, and that mobile nodes act cooperatively in the proposed auction for the benefit of all the parties involved. We also consider the mobility of mobile nodes in our game theoretic framework. In such a mobile environment, the topology of the network may change over time. 
Thus, some nearby nodes may get disconnected from the distributor before completing the assigned task, resulting in an increased job completion time. To deal with mobility, we perform multiple auctions over adaptive time intervals. We develop a heuristic approach to dynamically find the best time intervals between auctions to minimize unnecessary auctions and the accompanying overheads. We briefly explore the privacy of the distributor and the nearby mobile devices and show a tradeoff between providing privacy, and the profits for the parties involved. 

We evaluate our framework and methods using both real world and synthetic mobility traces. We use two models of compute jobs - a simple single job model, and a multiple job model that uses a Directed Acyclic graph to  represent causal dependencies in a set of jobs. Our evaluation results show that our game theoretic framework improves the job completion time by a factor of 2-5 in comparison to the local execution of the job, in both the job models, while minimizing the number of auctions. Thus, our approach is beneficial for the distributor in terms of enhancing its performance. We also show that the nearby nodes that execute the distributor's tasks receive a compensation higher than their actual costs. 

\section{Related work}\label{secrelwork}

Many existing works in distributed computing (e.g.,SETI@
Home~\cite{setihome}, BOINC~\cite{boinc}, and cyber foraging~\cite{cyberforaging1}
) have proposed using other nodes for executing the tasks of a job. 
However, all of these existing works primarily assume altruistic behavior in the distributed computing environment and do not carefully incentivize resource sharing. Like our work, Serendipity~\cite{mostafaamar} enables remote computing among a set of intermittently connected mobile devices. However, our work differs from Serendipity in the following significant ways. First and foremost, Serendipity does not incentivize resource sharing. Second, it does not consider heterogeneity among mobile devices in task allocation. Moreover, all assigned tasks are assumed to have equal workload. Third, Serendipity does not use any adaptive methods for reassigning tasks. 

Our game theoretic framework is inspired by the multidimensional mechanism proposed for the second score auction~\cite{multidimensionalscoreauction}, where the authors use a linear function to map the multidimensional bid into a single dimension. However, we use a fractional function for the mapping which is more suitable to our setting (see equation~\ref{equScore}). We also extend the existing multidimensional mechanism by allowing selection of $k$ items while considering a budget limit instead of selection of only one item. The use of multidimensional auction allows us to consider the heterogeneity of nearby resources in task allocation. We also minimize the number of auctions and the accompanying overhead by developing a heuristic approach for finding the best time intervals between auctions.

Existing incentive mechanisms that have been used for task allocation~\cite{incetivetaskallocation1,incetivetaskallocation2, incentivecrowdsourcing}
, do not consider multidimensional auction where both cost and the service quotient are important. The authors in~\cite{khaledioptimalauction} proposed a multidimensional optimal auction to provide incentive in mobile ad hoc network routing, considering both cost and the path duration in route selection. However, their proposed approach only works for time related bids such as path duration where the players cannot over-report the time related bid. In our task allocation problem, players can over-report and under-report both cost and the committed service quotient.

\section{Problem Formulation And Solution}\label{secsysappoach}


\begin{table}[!t]
\caption{Notation}
\label{tblnotation}
\centering
\resizebox{8cm}{!}{\begin{tabular}{c c }
\hline

 Parameter & Definition \\
\hline
\hline
$(c_i, s_i)$& The cost and committed service quotient for player $i$ \\
$a_{c_i, s_i}$& Allocation rule for player $i$ with type $(c_i, s_i)$\\
$u_i(c_i, s_i)$& Utility of player $i$ with type $(c_i, s_i)$\\
$p(a_{c_i, s_i})$& Payment to player $i$ under allocation rule $a_{c_i, s_i}$\\
$B$& The distributor budget limit\\
$D$& The total workload of job\\
$T$& Time interval between auctions\\
$b$& Linear increase factor in heuristic approach\\
\hline
\end{tabular}}
\end{table}

We consider mobile cloud computing in the presence of selfish smart phones. We assume that all smart phones act rationally and selfishly, and their main goal is to maximize their own profits, not to harm others. There is a distributor node that wants to offload a computational job, with a total work load of $D$ units, to its nearby nodes with the goal of reducing the overall job completion time. Since a smart phone incurs a cost (in form of resource and battery usage) while performing a task on behalf of the distributor, it may not be willing to participate unless the distributor provides right incentives.

We propose a multidimensional auction to model the job/
task allocation in mobile cloud computing. In our auction, the distributor holds the auction among $n$ nearby smart phones called \textit{players}. Each player $i$ has an individual private value $t_i$ called its \textit{type} which consists of the following two parameters:

\begin{itemize}

\item $s_i$: the committed service quotient that player $i$ can provide. In distributed computing applications such as speech to text conversion, the service quotient is related to the amount of data that the smart phone can process in a given time (the smart phone finds this information by estimating its maximum execution time running synthetic benchmark and its energy consumption using techniques like PowerBooter~\cite{powerbooter}). In our system, $s_i$ denotes the amount of data that player $i$ can process per second. It should be noted that the distributor is not aware of the actual $s_i$s until the tasks are completed and the results are received from nearby players who have been assigned the tasks.

\item $c_i$: the cost of player $i$ for performing the task with the committed service quotient. $c_i$ is a function of the committed service quotient, $s_i$, and the player's private cost of performing the task, $\theta_i$. We bound $\theta_i$ such that $0<\theta_{min}\leq\theta_{max}<\infty$. $\theta_i$ is affected by various parameters such as processor speed, available storage, remaining battery level, communication cost, etc. $c_i$ is an increasing function of both $s_i$ and $\theta_i$, and is private information of the player $i$, hence, no one else can determine the exact value of $c_i$.

\end{itemize}

Our mechanism works as follows. First, the distributor sends a probe message to find the nearby nodes. Then, each nearby node $i$, interested in participating, replies by announcing its bid $(c_i, s_i)$. Note that $(c_i, s_i)$ announced by the node $i$ need not be the actual private value of its type, $t_i$. Based on the received bids, the distributor selects a set of players and assigns the tasks to them in proportion to their committed service quotient. To provide incentive to the players for their resources, the distributor compensates them by paying the players. The distributor also has incentive to offload tasks to nearby mobile devices because the overall job completion time is less than the time taken for executing the job locally. Therefore, our game theoretical framework is beneficial to all parties involved.

Our mechanism implements truthfulness in \textit{dominant equilibrium} implying that each player's best strategy, regardless of other players strategies, is to bid truthfully (i.e. to report actual values of $s_i$ and $c_i$). Table~\ref{tblnotation} summarizes the notation we use in this and the subsequent sections.

\subsection{Allocation Mechanism}
In this section, we specify how the distributor selects the nearby mobile devices by considering both the cost and the service quotient. The distributor wants players with minimum cost and maximum service quotient. For this reason, the distributor defines a weights function for player $i$ with type $(c_i, s_i)$ as follows:
\begin{equation}\label{equScore}
w_i=\frac{s_i}{c_i}
\end{equation}
The distributor must also determine how many players it should select. Intuitively, the number of players affects the job completion time and the sum of premiums paid by the distributor to the players over the players' actual costs (\textit{overpayments}). As the number of selected players is increased, the job completion time decreases due to more tasks being executed in parallel. However, the increase in number of selected players also increases the overpayments.

To limit its cost, the distributor defines a budget limit $B$ as the maximum amount that it can pay for the processing of one unit of data per second. $B$ is a function of the distributor utility, $u_d$. An increase in the distributor utility, $u_d$, makes the distributor willing to increase its budget limit. Formally,
\begin{equation}\label{eqbudgetimit}
B(u_d)=\left\{
\begin{array}{@{} l c @{}}
f^+(u_d^{max}) & u_d > u_d^{max} \\
f^+(u_d)& u_d^{min}\leq u_d \leq u_d^{max}\\
0 &u_d<u_d^{min}
\end{array}\right.
\end{equation}
Here, $u_d^{min}$ and $u_d^{max}$ are the minimum and the maximum utilities that the distributor expects when using our task allocation framework. When the obtained utility, $u_d$, is less than $u_d^{min}$, the distributor prefers to execute the job locally. By increasing the value of $u_d$ between $u_d^{min}$ and $u_d^{max}$, the distributor is willing to use the task allocation framework. In this case, the budget limit is a nondecreasing function of $u_d$ ($f^+(u_d)$ represents the nondecreasing function of $u_d$). $u_d^{max}$ is the saturation point. Increasing the value of $u_d$ beyond this saturation point does not increase the budget limit, $B$. The distributor determines $u_d^{min}$, $u_d^{max}$, and $f^+(u_d)$ depending on the application. For example, the distributor may prefer to choose a constant function for $f^+(u_d)$, if the distributor utility, $u_d$, be above $u_d^{min}$ or the distributor may select an increasing exponential function for $f^+(u_d)$ in real time applications where time is critical.

Let $t_O$ represent the job completion time when the distributor uses our task allocation framework, and let $t_L$ represent the job completion time when the distributor executes the job locally. Also, let $e_O$ and $e_L$ be the energy consumption when the distributor using our task allocation framework and when the distributor executing the job locally. Then, the distributor utility, $u_d$, will be:
\begin{equation}\label{equutilitydistributor}
u_d=\alpha(t_L-t_O)+\beta(e_L-e_O)
\end{equation}

The values of $\alpha$ and $\beta$ determine the importance of job completion time and energy consumption in the distributor utility. In our evaluation, the distributor only wants to reduce the job completion time and ignores the energy consumption (we set $\alpha=1$ and $\beta=0$). We ignore the energy consumption for the following reasons. First, our task allocation is designed for high computational tasks such as audio identification and image recognition. In this case, the energy consumption for executing task, $e_L$, is greater than the energy consumption for communication and transferring data among nearby mobile nodes, $e_O$. Second, in our task allocation framework, a computing cloud refers to a group of nearby mobile devices that connect by WiFi and Bluetooth. Thus, the energy consumption for communication and data transferring is less than the common client-server cloud. 

Also, note that the distributor computes $t_O$ by considering the pre-processing time, the overhead of dividing job into tasks, the post-processing overhead of assembling the results, and the maximum job completion time among the selected nearby mobile nodes (the distributor can estimate the job completion time in each nearby mobile user by using their declared service quotients).

Now the goal is to select a subset of players that maximize the total weight under the budget limit constraint. We use a simple and efficient greedy approach for allocation. First, the distributor orders the players based on their decreasing weights. Next, it selects the largest number of players $\lbrace 1, 2, 3, \ldots, k \rbrace$ that satisfy the budget limit constraint (i.e. constraint~\ref{bg2}). By considering the budget limit constraint, the distributor obtains at least the minimum utility, $u_d^{min}$. Therefore, participating in the task allocation is beneficial for the distributor. In the next section, we explain the payment function that the distributor pays to the nearby mobile users to provide incentives for them to participate in the task allocation.
\subsection{Payment Mechanism}
In this section, we determine the payment Mechanism. The payment Mechanism must provide the following desirable economic properties to ensure that players act cooperatively and bid truthfully.
\begin{itemize}
\item \textit{Individual Rationality}: The utility of all players should always be non-negative. Otherwise, players may choose to  not participate.
\item \textit{Incentive Compatibility}: In an incentive compatible mechanism, no selfish node has incentive to lie (also called truthfulness).
\end{itemize}
Let $a_{c_i, s_i}\in\lbrace 0, 1\rbrace$ denote the allocation to player $i$ with type $(c_i, s_i)$ where if the distributor offloads task to the player, $a_{c_i, s_i}=1$, otherwise, $a_{c_i, s_i}=0$. Also, let $p_i(a_{c_i, s_i})$ be the payment that the distributor pays to the player $i$ under allocation rule $a_{c_i, s_i}$. Then, the utility of player $i$ with type $(c_i, s_i)$ is obtained from the following formula:
\begin{equation}\label{equutility}
u_i(c_i, s_i)=p_i(a_{c_i, s_i})- c_i
\end{equation}
Having determined the utility of players, the payment Mechanism must satisfy the following constraints.  
\begin{gather}
\forall i, c'_i, p_i(a_{c_i, s_i})-c_i\geq  p_i(a_{c'_i, s_i})-c_i\label{equattIC1}\\
\forall i, s'_i, p_i(a_{c_i, s_i})-c_i\geq  p_i(a_{c_i, s'_i})-c_i\label{equattIC2}\\
\forall i, c'_i, s'_i, p_i(a_{c_i, s_i})-c_i\geq  p_i(a_{c'_i, s'_i})-c_i\label{equattIC3}\\
\forall i, p_i(a_{c_i, s_i})-c_i \geq 0 \label{ir}\\
\sum_{i=1}^n \frac{p_i(a_{c_i, s_i})}{s_i}\leq B(u_d)
 \label{bg2}
\end{gather}
Constraints \ref{equattIC1}, \ref{equattIC2}, and \ref{equattIC3} provide incentive compatibility for both cost and the committed service quotient. Constraint \ref{ir} is for individual rationality, and constraint \ref{bg2} captures the budget limit of the distributor.

The payment to player $i$ consists of two parts:
\begin{equation}\label{eqpayment}
p_i(a_{c_i, s_i})=\left\{
\begin{array}{@{} l c @{}}
p^1_i(a_{c_i, s_i})+p^2_i(a_{c_i, s_i}) & i\leq k\\
0 & i > k
\end{array}\right.
\end{equation}
Where $p^1_i(a_{c_i, s_i})$ is paid to provide incentive compatibility for $c_i$ and $p^2_i(a_{c_i, s_i})$ is paid to provide incentive compatibility for both $c_i$ and $s_i$. $p^1_i$  is obtained from the following formula:
\begin{equation}\label{eqpay1}
p^1_i(a_{c_i, s_i})= \frac{s_i c_{k+1}}{s_{k+1}}
\end{equation}

Here, $k+1$ is the index of the player with largest weight after the selected $k$ players.

Let $s_i$ be the actual service quotient (e.g., the amount of data that player $i$ can process per second) that the distributor finds after the task completion by player $i$, and also let $s'_i$ be the announced service quotient by the same player. Then, the second payment, $p^2_i$, is of the following form.
\begin{equation}\label{eqpay2}
p^2_i(a_{c_i, s_i})=d(s_i, s'_i)(s_i-s'_i)
\end{equation}

Here, $d(s_i, s'_i)$ is a positive and nondecreasing function of $(s_i-s'_i)<0$, that determines the impact of misreporting the service quotient. A suitable choice for the function $d(s_i, s'_i)$ must satisfy the following three properties. First, if the player $i$ over-reports its committed service quotient ($s_i < s'_i$), then $p^2_i(a_{c_i, s_i})$ must be negative, i.e., the player must give back some money to the distributor. Second, a player must not pay a penalty for under-reporting its committed service quotient, i.e., when $s_i > s'_i$, $d(s_i, s'_i)=0$. Third, the choice of $d(s_i, s'_i)$ should lend itself to satisfying the incentive compatibility property. In this paper, we define $d(s_i, s'_i)$ as follows:
\begin{equation}\label{eqdefic}
d(s_i, s'_i)=\left\{
\begin{array}{@{} l c @{}}
0 & s_i\geq s'_i\\
(s'_i-s_i)^2+\frac{c_{k+1}}{s_{k+1}} & s_i < s'_i
\end{array}\right.
\end{equation}

\begin{figure}[!t]
\centerline{
\subfigure[]{\label{fig_d}\includegraphics[width=1.8in,height=1in]{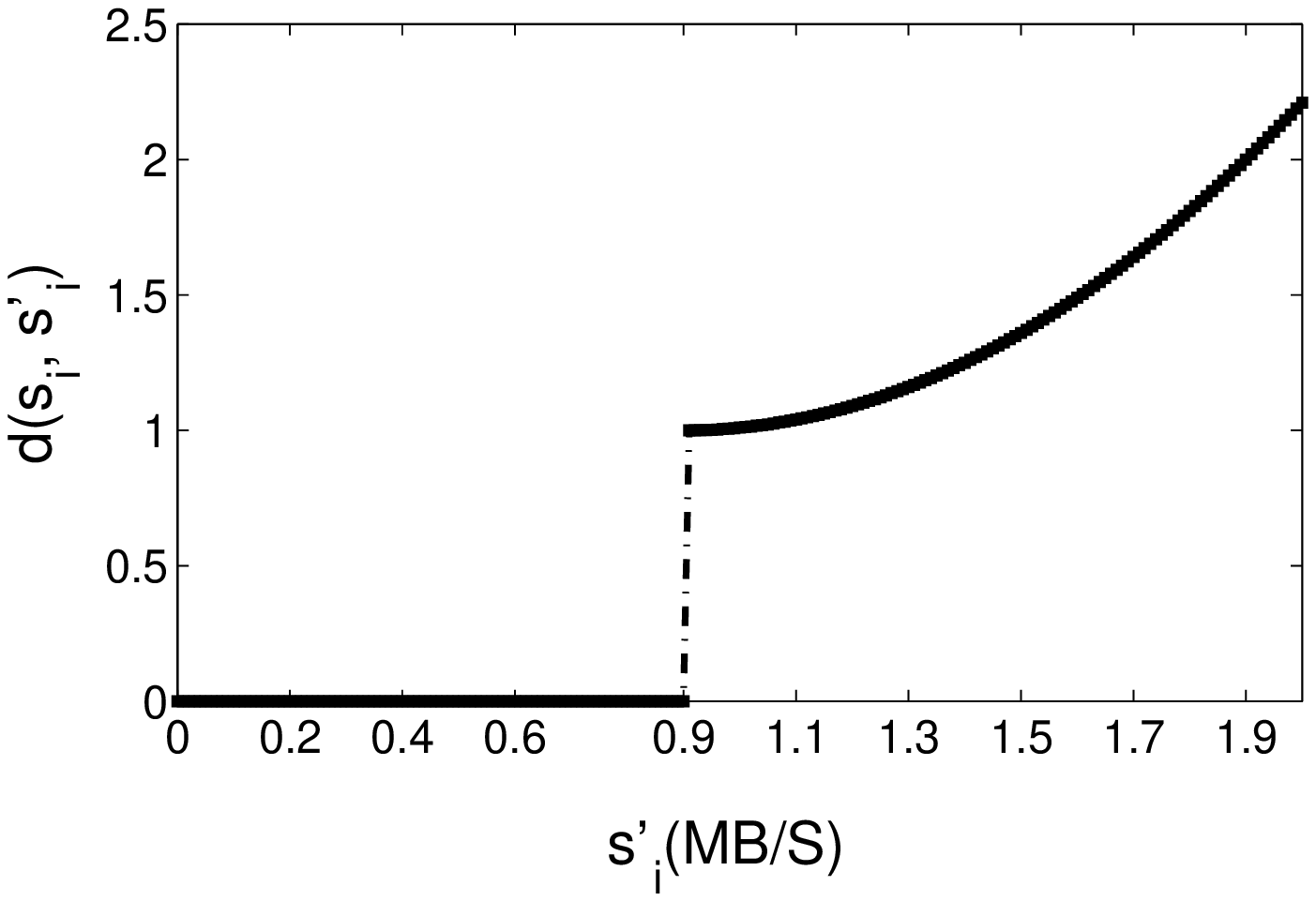}}
\hfil
\subfigure[]{\label{fig_utility}\includegraphics[width=1.8in,height=1in]{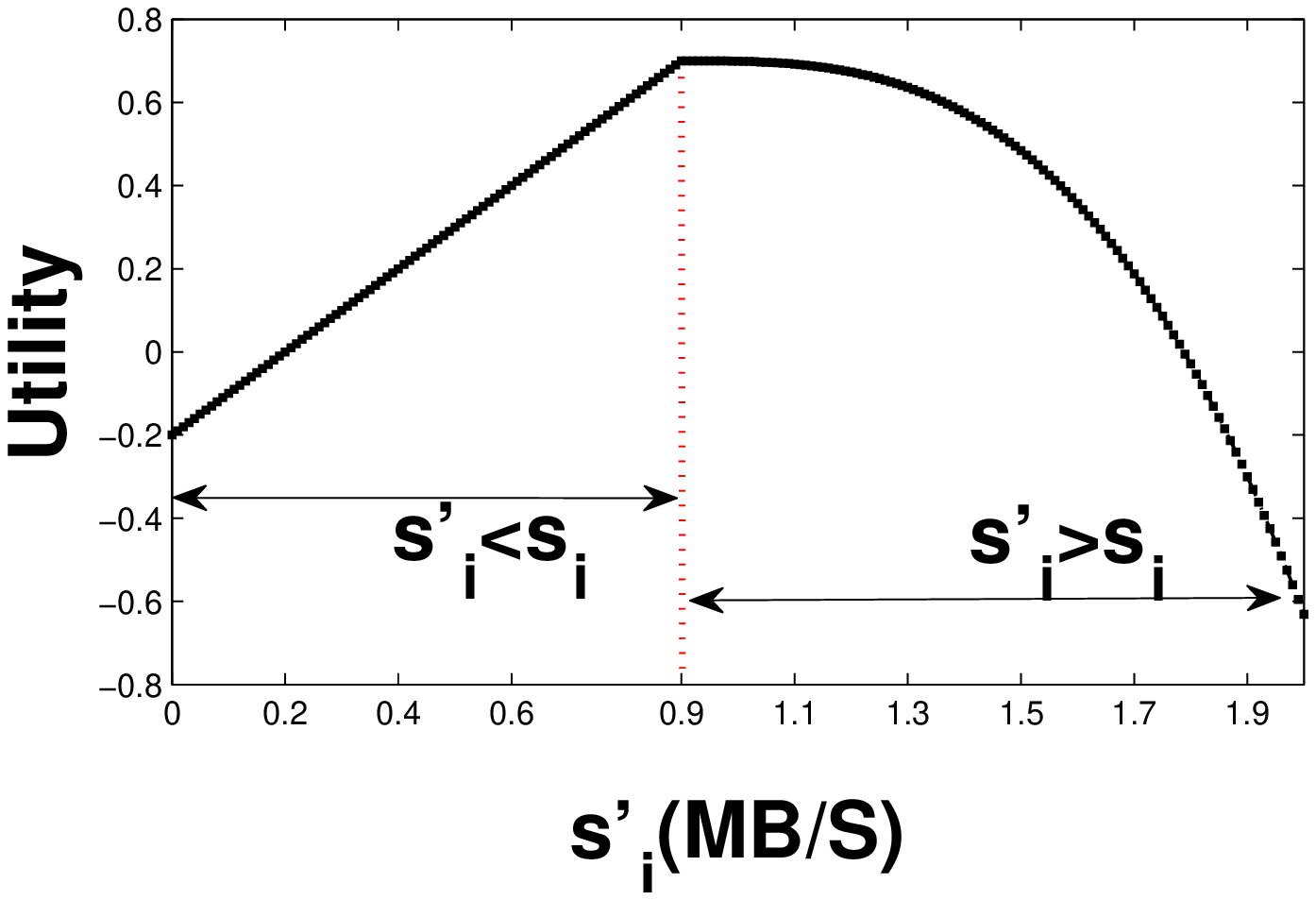}}}
\caption{$d(s_i, s'_i)$ (a) and the utility (b) for different values of $s'_i$, with $s_i=0.9$ Mbytes/sec.}
\label{fig_dutility}
\end{figure}
This definition of $d(s_i, s'_i)$ satisfies the first two required properties described above. We show that this choice of $d(s_i, s'_i)$ helps satisfy the incentive compatibility property in Lemma~\ref{lem2}. 
 Figure \ref{fig_d} shows an instantiation of $d(s_i, s'_i)$. 
In this figure, the actual value of the service quotient is $0.9$ Mbytes/sec. As shown in the figure, the value of $d(s_i, s'_i)$ increases if the player declares its committed service quotient ($s'_i$) to be greater than its actual service quotient. Also, the value of $d(s_i, s'_i)$ is zero if the player under-reports its committed service quotient.

Note that other choices of $d(s_i, s'_i)$ are also possible. Depending on the application, we can choose different functions for $d(s_i,s'_i)$. For example, in some applications the value of $s_i$ might vary with changes in the environment that are not within the of control of the player. To provide incentive compatibility and to also prevent nonessential punishment, we can also define $d(s_i, s'_i)$ as follows:
\begin{equation}\label{eqdefic1}
d(s_i, s'_i)=\left\{
\begin{array}{@{} l c @{}}
0 & s_i\geq s'_i\\
\frac{c_{k+1}}{s_{k+1}} & s_i < s'_i
\end{array}\right.
\end{equation}
In this case, if the player over-reports its committed service quotient ($s_i < s'_i$), then the payment is:
\begin{align*}
p_i(a_{c_i, s'_i})=&\frac{s'_i c_{k+1}}{s_{k+1}}+(s_i-s'_i)\frac{c_{k+1}}{s_{k+1}}=\frac{ c_{k+1}}{s_{k+1}}s_i=p_i(a_{c_i, s_i})
\end{align*}
This payment is the same as the payment if the player wants to act truthfully. Therefore, the player has no incentive to lie. On the other hand, if the committed service quotient reduces after the player declares it, the player is paid only for the service quotient it accomplishes. We can also add a small reward if the player under-reports its committed service quotient ($s_i>s'_i$). The amount of reward should be less than the amount of payment when the player acts truthfully ($s_i=s'_i$). The player has no incentive to under-report its committed service quotient. However, if the the committed service quotient increases after the player declares it, the player has an added incentive to provide a better service quotient.

Having determined the allocation rule and the payment policy, we must now prove the economic properties of our multidimensional auction, namely individual rationality and incentive compatibility using the definition of $d(s_i, s'_i)$ in equation \ref{eqdefic}. The same reasoning can be applied to other definitions of $d(s_i, s'_i)$, e.g., in equation \ref{eqdefic1}. 

\subsection{Proofs}\label{subsecproofs}
Players may cheats about their types, $(c, s)$, to gain extra profit. Incentive compatibility (IC) ensures that players truthfully reveal their actual types. In a multidimensional mechanism, proving IC is challenging. This is because several cheating scenarios, formed from combinations of cheating in each dimension, must be considered. We must prove IC in the following conditions.
\begin{enumerate}
\item The truthful revelation of a player's cost ($c$) is a dominant equilibrium, given that the player reveals its committed service quotient ($s$) truthfully (constraint \ref{equattIC1}).
\item The truthful revelation of a player's committed service quotient ($s$) is a dominant equilibrium, given that the player reveals its cost ($c$) truthfully (constraint \ref{equattIC2}). 
\item The truthful revelation of a player's cost and committed service quotient ($c, s$) is a dominant equilibrium (constraint \ref{equattIC3}).
\end{enumerate}

\newtheorem{lemma}{Lemma}
\begin{lemma}\label{lem1}
Given the committed service quotient ($s$), truthfully revealing the cost ($c$) results in a dominant equilibrium.
\end{lemma}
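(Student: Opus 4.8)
The plan is to recognize Lemma~\ref{lem1} as the dominant-strategy incentive compatibility of a Vickrey-style mechanism in the single remaining free parameter $c_i$, and to prove it by a case analysis on whether the truthful bid wins. First I would note the simplification that, because $s_i$ is held fixed at its true value, $s'_i=s_i$ forces $d(s_i,s'_i)=0$ in equation~\ref{eqdefic} and hence $p^2_i=0$ by equation~\ref{eqpay2}; so the relevant payment is $p^1_i=s_i c_{k+1}/s_{k+1}=s_i/w_{k+1}$ when $i$ is selected and $0$ otherwise, where $w_j=s_j/c_j$ is the weight from equation~\ref{equScore}. Two structural facts about the greedy allocation then do the work: (a) the selected set is always a prefix of the decreasing-weight ordering, so $i$ wins iff its weight-rank is at most the budget-limited count $k$; and (b) lowering (resp.\ raising) one's reported cost only raises (resp.\ lowers) one's weight, hence moves one only upward (resp.\ downward) in the ordering, while leaving the entire tail of the ordering below one's position --- in particular the critical index $k+1$ and the price $c_{k+1}/s_{k+1}$ --- unchanged.

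With these in hand I would do the two cases. \emph{Case 1 ($i$ wins truthfully).} Being in the prefix means $w_i\ge w_{k+1}$, i.e.\ $s_i/c_i\ge s_{k+1}/c_{k+1}$, which rearranges to $p^1_i=s_i c_{k+1}/s_{k+1}\ge c_i$, so $u_i=p^1_i-c_i\ge 0$ (this is also the individual-rationality argument for winners, constraint~\ref{ir}). Any deviation $c'_i$ either keeps $i$ selected --- in which case fact (b) says the price, and hence the payment, are unchanged, so the utility is unchanged --- or drops $i$ out, giving utility $0\le u_i$. \emph{Case 2 ($i$ loses truthfully),} so $u_i=0$. A deviation keeping $i$ a loser leaves the utility at $0$; a deviation turning $i$ into a winner must be an under-report $c'_i<c_i$, and here the key point is that the suffix of the ranking below $i$'s true rank $r$ is unaffected, which forces the post-deviation number of winners to stay strictly below $r$, so the post-deviation critical player lies among the players that truly ranked in the top $r$ --- all of whom have weight at least $w_i=s_i/c_i$. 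Hence $s_{k+1}/c_{k+1}\ge s_i/c_i$, so the new payment satisfies $p^1_i=s_i c_{k+1}/s_{k+1}\le c_i$ and the deviation utility is $\le 0=u_i$. Combining the cases, $p_i(a_{c_i,s_i})-c_i\ge p_i(a_{c'_i,s_i})-c_i$ for every $i$ and every $c'_i$, which is exactly constraint~\ref{equattIC1}, so truthful revelation of $c$ is a dominant strategy.

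The step I expect to be the main obstacle is making fact (b) and the Case-2 bound rigorous in the presence of the budget constraint~\ref{bg2}, since the number of selected players $k$ is itself profile-dependent through $k\cdot(c_{k+1}/s_{k+1})\le B$: one must check that when $i$ merely reorders the top of the ranking, the budget-feasibility test for counts $\ge k+1$ (equivalently, for selecting any player past $i$'s position) is evaluated on an unchanged suffix and therefore returns the same answer, so $k$ and the critical player are genuinely invariant; and symmetrically, that selecting $i$'s true rank $r$ or more players remains infeasible after $i$ moves up, because the positions beyond $r$ are untouched. Once this ``invariance of the tail'' monotonicity lemma is in place, the Vickrey argument above closes without further computation.
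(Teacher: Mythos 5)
Your proof is correct and follows essentially the same route as the paper's: a winner/loser case split, the observation that the payment $s_i c_{k+1}/s_{k+1}$ does not depend on the reported cost conditional on winning, and the bound $c_{k+1} > s_{k+1}c_k/s_k$ showing that a loser who under-reports to become a winner earns negative utility. The only difference is that you make explicit the tail-invariance of the critical index $k+1$ and its price under the budget constraint, a step the paper's proof takes for granted.
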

\begin{proof}
 There are two possible cases:
 \begin{itemize}
 \item
 \textit{Player is one of the winners.} If by over-reporting or under-reporting, the player still remains a winner, the utility of the player does not change. If the player become a loser by over-reporting, then the utility of the player becomes zero which is less than that it can obtain by acting truthfully. Therefore, in this case, the player has no incentive to lie.
 \item
  \textit{Player is one of the losers.} If by over-reporting or under-reporting the player remains a loser, the utility of the player dose not change (it is still zero). On the other hand, if a player, (without loss of generality) $k+1$, under-reports the value of $c_{k+1}$ to become a winner (i.e., $\frac{s_{k+1}}{c'_{k+1}} >\frac{s_{k}}{c_{k}}$), its utility becomes: 
 \begin{equation}
 u_{k+1}(c'_{k+1}, s_{k+1})=\frac{s_{k+1} c_k}{s_k}-c_{k+1}
 \end{equation}
 However, we know that $\frac{s_k}{c_k}> \frac{s_{k+1}}{c_{k+1}}$. This means that $c_{k+1}>\frac{s_{k+1} c_k}{s_k}$. Therefore, the utility is negative and the player has no incentive to lie.
 \end{itemize}
\end{proof}
\begin{lemma}\label{lem2}
Given the cost ($c$), truthfully revealing the committed service quotient ($s$) results in a dominant equilibrium.
\end{lemma}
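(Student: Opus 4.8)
The plan is to mirror the proof of Lemma~\ref{lem1}. Fix player $i$ and assume that its cost $c_i$ is reported truthfully, so the only free choice is the announced service quotient $s'_i$, while the quotient actually delivered is $s_i$. By \ref{equutility} the utility is $p_i(a_{c_i,s'_i})-c_i$, and since $c_i$ does not depend on $s'_i$, it suffices to show that no announcement $s'_i\neq s_i$ yields a payment exceeding the one obtained at $s'_i=s_i$. I would split into two top-level cases --- player $i$ is a winner, or a loser, under truthful reporting --- and, inside each, according to whether the misreport flips $i$'s allocation.

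\emph{Player $i$ is a winner.} If $i$ over-reports ($s'_i>s_i$) its weight $s'_i/c_i$ in \ref{equScore} only increases, so $i$ stays among the top $k$; since no new player is introduced, the threshold index $k+1$ and the number $k$ selected by the greedy under the budget constraint \ref{bg2} are unchanged. Substituting $d(s_i,s'_i)=(s'_i-s_i)^2+c_{k+1}/s_{k+1}$ from \ref{eqdefic} into $p_i=p^1_i+p^2_i$ (equations \ref{eqpay1}, \ref{eqpay2}) and telescoping gives
\[
p_i(a_{c_i,s'_i})=\frac{s_i c_{k+1}}{s_{k+1}}-(s'_i-s_i)^3<\frac{s_i c_{k+1}}{s_{k+1}}=p_i(a_{c_i,s_i}),
\]
so over-reporting strictly lowers the utility. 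If $i$ under-reports ($s'_i<s_i$) then $d(s_i,s'_i)=0$, so either $i$ is still a winner and is paid only $s'_i c_{k+1}/s_{k+1}<s_i c_{k+1}/s_{k+1}$, or $i$ drops out of the top $k$ and gets utility $0$, which is at most its truthful utility by individual rationality (constraint \ref{ir}); in both subcases truthful reporting is weakly better.

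\emph{Player $i$ is a loser.} Any misreport that keeps $i$ a loser leaves its utility at $0$, so the only nontrivial subcase is when $i$ over-reports enough to enter the winner set. Following Lemma~\ref{lem1}, take (without loss of generality) $i=k+1$ announcing $s'_{k+1}$ with $s'_{k+1}/c_{k+1}>s_k/c_k$, so that $i$ displaces player $k$ and the new threshold is player $k$; the constant in $d(\cdot,\cdot)$ is then $c_k/s_k$ and $p^1_{k+1}=s'_{k+1}c_k/s_k$. The same telescoping as above yields utility $\tfrac{c_k}{s_k}s_{k+1}-(s'_{k+1}-s_{k+1})^3-c_{k+1}$, and since $s_{k+1}/c_{k+1}\le s_k/c_k$ implies $\tfrac{c_k}{s_k}s_{k+1}\le c_{k+1}$, this is at most
\[
c_{k+1}-(s'_{k+1}-s_{k+1})^3-c_{k+1}=-(s'_{k+1}-s_{k+1})^3<0,
\]
so a loser cannot profit by over-reporting into the winner set. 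Collecting the cases shows that $s'_i=s_i$ is a (weakly) dominant strategy.

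The step I expect to be the main obstacle is making the loser-becomes-winner subcase rigorous beyond the representative $i=k+1$ instance: when an inflated weight jumps over several players at once, one must argue that re-sorting by weight together with the budget constraint \ref{bg2} still leaves a threshold player $t$ with $s_t/c_t\ge s_i/c_i$, and that the constant $c_{k+1}/s_{k+1}$ appearing in $d(s_i,s'_i)$ is consistently evaluated with respect to the \emph{post-deviation} allocation, so that the cancellation down to $-(s'_i-s_i)^3$ is legitimate. Once that convention is pinned down the algebra in every case is routine; the winner-stays-winner cases in particular are immediate.
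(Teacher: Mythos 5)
Your proof is correct, but it takes a genuinely different route from the paper's. The paper proves Lemma~\ref{lem2} with a first-order-condition argument: it writes $u_i=\frac{s'_i c_{k+1}}{s_{k+1}}+d(s_i,s'_i)(s_i-s'_i)-c_i$, differentiates with respect to the announcement $s'_i$, and observes that the derivative is positive for $s'_i<s_i$ (where $d\equiv 0$), vanishes at $s'_i=s_i$, and is negative for $s'_i>s_i$, so the truthful report is the global maximizer. You instead run a discrete case analysis (winner/loser, allocation flipped or not) and evaluate the payment in closed form, telescoping $p^1_i+p^2_i$ down to $\frac{s_i c_{k+1}}{s_{k+1}}-(s'_i-s_i)^3$; this is precisely the algebra the paper itself uses later in case~3 of Lemma~\ref{lem3}, so your computation is consistent with the paper's conventions. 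What your version buys is explicit coverage of the allocation transitions that the paper's derivative argument silently assumes away --- a winner dropping out of the top $k$ by under-reporting, and a loser inflating $s'$ to displace a winner --- which the paper only addresses in Lemmas~\ref{lem1} and~\ref{lem3}. What the paper's version buys is brevity and a weaker dependence on the exact form of $d$: the sign argument needs only that $\frac{\partial d}{\partial s'_i}\geq 0$ and $d\geq \frac{c_{k+1}}{s_{k+1}}$ for $s'_i>s_i$, which is why the paper can assert that the same reasoning carries over to the alternative $d$ of equation~\ref{eqdefic1}, whereas your cubic cancellation is tied to the quadratic choice in equation~\ref{eqdefic}. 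The residual issue you flag --- which threshold constant enters $d$ when a loser leapfrogs into the winner set --- is real and is not resolved in the paper either; the paper's implicit convention, visible in case~3 of Lemma~\ref{lem3}, is that the constant is evaluated at the post-deviation threshold $\frac{c_k}{s_k}$, exactly as you assumed, and with that convention your argument goes through.
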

\begin{proof}
We show that the utility of the player has its maximum value at $s'_i= s_i$. Thus, the player has no interest to misreport $s$.

By substituting $p^1$ and $p^2$ in equation \ref{equutility} and taking derivative with respect to $s'_i$,
\begin{align*}
&\frac{\partial  u_i}{\partial s'_i}=\frac{c_{k+1}}{s_{k+1}}+\frac{\partial d(s_i, s'_i)}{\partial s'_i}(s_i-s'_i)-d(s_i, s'_i)
\end{align*}
For $s'_i=s_i$,  $\frac{\partial  u_i}{\partial s'_i}=0$. For $s'_i \in (s_i,+\infty ]$, given that $\frac{\partial d(s_i, s'_i)}{\partial s'_i}$ is positive, $\frac{\partial  u_i}{\partial s'_i}$ is negative. Also, for $s'_i \in (0,s_i]$, $\frac{\partial  u_i}{\partial s'_i}$ is positive. As a result, $s'_i=s_i$ is the maximum point. Figure~\ref{fig_utility}, shows the utility of a player for different values of $s'_i$. This figure also shows that the utility is maximum at point $s'_i=s_i$.
\end{proof}

\begin{lemma}\label{lem3}
Truthfully revealing both cost and the committed service quotient $(c, s)$ results in a dominant equilibrium.
\end{lemma}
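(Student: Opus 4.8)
The goal is to show that for player~$i$ with true type $(c_i,s_i)$, any joint misreport $(c'_i,s'_i)$ satisfies $u_i(c_i,s_i)\ge u_i(c'_i,s'_i)$, for every fixed profile of the other players' bids. The plan is to \emph{not} attack the two-dimensional deviation head on, but to split on the only thing that really matters for the payoff, namely whether player~$i$ is selected --- before (under truthful bidding) and after the deviation --- and in each of the four resulting cases fall back on one of: Lemma~\ref{lem1}'s weight comparison, Lemma~\ref{lem2}'s single-variable calculus, or individual rationality (constraint~\ref{ir}). Throughout I would use that, once the \emph{set} of winners is fixed, the pivot pair $(c_{k+1},s_{k+1})$ --- and hence the per-unit price $c_{k+1}/s_{k+1}$ appearing in $p^1_i$ (equation~\ref{eqpay1}) and inside $d$ (equation~\ref{eqdefic}) --- is determined entirely by the other players; player~$i$'s own bid only moves it \emph{within} that set or \emph{in/out} of it.

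The four cases: \textbf{(1)} $i$ is not selected after the deviation; then its payoff is $0$, and $u_i(c_i,s_i)\ge 0$ by constraint~\ref{ir}. \textbf{(2)} $i$ is selected both truthfully and after the deviation; then the winning set is the same $k$ players, the pivot $(c_{k+1},s_{k+1})$ is unchanged, and the deviated payoff equals $\frac{s'_i c_{k+1}}{s_{k+1}}+d(s_i,s'_i)(s_i-s'_i)-c_i$, which does not depend on $c'_i$ at all and, as a function of $s'_i$, is exactly the function shown in Lemma~\ref{lem2} to be maximised at $s'_i=s_i$, where it equals $u_i(c_i,s_i)$. \textbf{(3)} $i$ is selected truthfully but not after the deviation; then the deviated payoff is $0\le u_i(c_i,s_i)$, again by constraint~\ref{ir}. \textbf{(4)} $i$ is not selected truthfully but is selected after the deviation; this is the substantive case. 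Here $i$ is a genuine loser, so its true weight lies below the $k$-th winning weight, which forces $c_i> s_i\,(c_{k+1}/s_{k+1})$ with $(c_{k+1},s_{k+1})$ the \emph{post}-deviation pivot (the displaced would-be winner) --- this is Lemma~\ref{lem1}'s inequality. Substituting this bound into the deviated payoff $\frac{s'_i c_{k+1}}{s_{k+1}}+d(s_i,s'_i)(s_i-s'_i)-c_i$ and using the definition of $d$ in equation~\ref{eqdefic}, whose additive $c_{k+1}/s_{k+1}$ term is tailored precisely for this, the payoff collapses to at most $(s'_i-s_i)(c_{k+1}/s_{k+1})\le 0$ when $s'_i\le s_i$ (where $d\equiv 0$) and to at most $-(s'_i-s_i)^3\le 0$ when $s'_i>s_i$; either way it is $\le 0=u_i(c_i,s_i)$. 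Combining the four cases proves the lemma and, together with Lemmas~\ref{lem1} and~\ref{lem2}, completes incentive compatibility (constraints~\ref{equattIC1}--\ref{equattIC3}).

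The hard part is case~(4): it is the only genuinely two-dimensional deviation, where the identity of the pivot player and the sign of $p^2_i$ shift at once, so neither Lemma~\ref{lem1} nor Lemma~\ref{lem2} applies off the shelf and their two estimates must be combined; it is also the only place where the particular choice of $d$ in equation~\ref{eqdefic} is essential, through its embedded $c_{k+1}/s_{k+1}$ term. Two bookkeeping points need care along the way: first, that in cases~(2) and~(4) the greedy budget-limited selection (constraint~\ref{bg2}) still returns the same $k$ after the deviation --- which holds because merely permuting the winners or raising some of the top weights never tightens $k\,(c_{k+1}/s_{k+1})\le B$; and second, keeping straight which quantity (the declared $s'_i$ or the realised $s_i$) enters $p^1_i$ versus $p^2_i$, since $p^2_i$ is exactly the term that reconciles the two. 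Everything outside case~(4) is routine once Lemmas~\ref{lem1} and~\ref{lem2} are in hand.
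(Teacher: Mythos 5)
Your proof is correct, but it organizes the case analysis differently from the paper's. The paper splits on the four sign patterns of the misreport ($c'\lessgtr c$ crossed with $s'\lessgtr s$) and, inside each, separately treats the winner-stays-winner and loser-becomes-winner transitions, so it effectively works through eight sub-cases, several of which are dispatched with a one-line verbal claim (``the utility of the player will decrease''). You instead split on the only payoff-relevant event --- the selection status before and after the deviation --- and observe that once selection is fixed the payoff is independent of $c'_i$. This collapses all of the paper's winner-stays-winner branches into your case~(2), which is exactly Lemma~\ref{lem2}, and collapses all of its loser-becomes-winner branches into your case~(4), where you carry out the same two computations the paper does only in its third case: the $d\equiv 0$ branch giving $(s'_i-s_i)\,c_{k+1}/s_{k+1}\le 0$, and the $s'_i>s_i$ branch giving $\frac{s_i c_{k+1}}{s_{k+1}}-c_i-(s'_i-s_i)^3<0$. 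The underlying inequalities are identical --- the pivot comparison $c_i>s_i c_{k+1}/s_{k+1}$ from Lemma~\ref{lem1}, the stationarity argument of Lemma~\ref{lem2}, and individual rationality --- so nothing materially new is proved, but your decomposition is tighter and makes explicit two points the paper leaves implicit: that the pivot $(c_{k+1},s_{k+1})$ and the cutoff $k$ are unaffected by an interior deviation, and that after a loser jumps into the winning set the relevant pivot is the displaced $k$-th player. Both write-ups are valid; yours is the cleaner account of why the two one-dimensional lemmas almost, but not quite, suffice for the joint deviation.
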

\begin{proof}
Let $c'$ and $s'$ be the declared cost and the committed service quotient of a player. Also, let $c$, and $s$ be the actual cost and preformed service quotient. Then, to prove incentive compatibility, we should consider the following four cases.
\begin{enumerate}
\item \textit{$c'< c$ and $s'< s$.}
If the player is already a winner and also wins by misreporting, then the utility of the winner will be less than that when acting truthfully.
\setlength\abovedisplayskip{0pt}
\begin{align*}
u_i(c'_i, s'_i)=\frac{s'_i c_{k+1}}{s_{k+1}}-c_i
\end{align*}
\setlength\belowdisplayshortskip{0pt}Given that $s'_i<s_i$, $u_i(c_i, s_i)>u_i(c'_i, s'_i)$ and the player has no incentive to lie.
If the player is not a winner, but wins the game by misreporting, then the utility of the player becomes negative (proof is similar to the second case of Lemma \ref{lem1}).
\item  \textit{$c'> c$ and $s'< s$.}
If the player is already a winner and still wins the game by misreporting, the utility of the player will decrease. On the other hand, if the player is not a winner, then it cannot win by misreporting. Thus, its utility does not change.
\item  \textit{$c'< c$ and $s'> s$.}
If the player is already a winner and remains a winner by misreporting, then the utility of the player is as follows.
\setlength\abovedisplayskip{0pt}
\begin{equation*}
u_i(c'_i, s'_i)=\frac{s'_i c_{k+1}}{s_{k+1}}-c_i+(s_i-s'_i)(\frac{c_{k+1}}{s_{k+1}}+(s'_i-s_i)^2)
\end{equation*}
By substituting $s'_i$ with $s_i+(s'_i-s_i)$, we have
\begin{gather}
u_i(c'_i, s'_i)=\frac{s_i c_{k+1}}{s_{k+1}}-c_i+(s_i-s'_i)(s'_i-s_i)^2\nonumber
\end{gather}
which is less than the case where the player acts truthfully.
If the player is not a winner, but wins the game by misreporting, its utility becomes
\begin{align*}
u_{k+1}(c'_{k+1}, s'_{k+1})&=\frac{s_{k+1} c_k}{s_k}-c_{k+1}+\left( (s'_{k+1}-s_{k+1})^2\right)\\
&(s_{k+1}-s'_{k+1})
\end{align*}
The first term $\frac{s_{k+1} c_k}{s_k}-c_{k+1}$ is negative, because $\frac{s_k}{c_k}>\frac{s_{k+1}}{c_{k+1}}$. The second term $\left( (s'_{k+1}-s_{k+1})^2\right)(s_{k+1}-s'_{k+1})$ is also negative, $s_{k+1}<s'_{k+1}$. Thus the utility is negative in this case.
\item  \textit{$c' > c$ and $s' > s$.}
If the player is already a winner and still wins by misreporting, then the utility of player decreases. Also, if the player is not a winner, but wins the game by misreporting, then its utility becomes negative.
\end{enumerate}
\end{proof}

We must now prove that the proposed multidimensional auction is individually rational. This property ensures that players participate in the task allocation game.
\begin{lemma}\label{lem4}
The proposed multidimensional auction satisfies the individual rationality constraint (constraint \ref{ir}).\end{lemma}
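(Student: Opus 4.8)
The plan is to show that every player's utility is non-negative, handling the two natural cases separately: a player who loses the auction, and a player who wins it. For a losing player $i>k$, equation~\ref{eqpayment} gives $p_i(a_{c_i,s_i})=0$, and since the distributor assigns no task, the incurred cost is $0$ as well, so $u_i(c_i,s_i)=0\geq 0$ trivially. The substance of the argument is therefore entirely about the winning players $i\leq k$.

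For a winning player $i\leq k$, I would start from equation~\ref{equutility}, $u_i(c_i,s_i)=p_i(a_{c_i,s_i})-c_i$, and plug in the payment decomposition of equation~\ref{eqpayment}, namely $p_i=p_i^1+p_i^2$ with $p_i^1=s_i c_{k+1}/s_{k+1}$ from equation~\ref{eqpay1}. The first step is to observe that because player $i$ is among the selected $k$ players, its weight satisfies $w_i=s_i/c_i\geq s_{k+1}/c_{k+1}=w_{k+1}$ (the players are ordered by decreasing weight, and $k+1$ is the first non-selected index). Rearranging this inequality gives $s_i c_{k+1}/s_{k+1}\geq c_i$, i.e.\ $p_i^1(a_{c_i,s_i})\geq c_i$, so $p_i^1-c_i\geq 0$. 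The second step is to invoke individual rationality with respect to the second payment term: when the player behaves truthfully, $s_i=s_i'$ (actual equals announced), so by equation~\ref{eqpay2} $p_i^2(a_{c_i,s_i})=d(s_i,s_i)(s_i-s_i)=0$. Hence $u_i(c_i,s_i)=p_i^1-c_i+0\geq 0$, which is the claim. I would also remark that even off the truthful path the mechanism never drives utility negative for a genuine winner when $s_i\geq s_i'$, since then $p_i^2=0$ again; the only regime where $p_i^2<0$ is over-reporting ($s_i<s_i'$), and Lemma~\ref{lem2} already established that this is suboptimal and still cannot push the utility below the truthful value — but for proving constraint~\ref{ir} as stated it suffices to evaluate at the reported type.

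The main obstacle here is conceptual rather than computational: being careful about which ``type'' the constraint~\ref{ir} is quantified over. Constraint~\ref{ir} reads $p_i(a_{c_i,s_i})-c_i\geq 0$, i.e.\ it is a statement about the utility at the player's reported/effective type, and the cleanest reading is that it holds with the truthful realization, so that the $p_i^2$ term vanishes and the entire weight of the proof rests on the greedy selection inequality $s_i/c_i\geq s_{k+1}/c_{k+1}$ for selected players. I would therefore present the proof as (i) the trivial loser case, (ii) the weight-ordering inequality implying $p_i^1\geq c_i$, and (iii) the vanishing of $p_i^2$ on the truthful path, concluding non-negativity of $u_i$ in all cases. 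If a reviewer wanted the stronger statement that IR survives arbitrary deviations, I would appeal to Lemmas~\ref{lem1}--\ref{lem3}, but that is strictly beyond what constraint~\ref{ir} asks.
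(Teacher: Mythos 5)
Your proposal is correct and follows essentially the same route as the paper's proof: the loser case is trivially zero, and for a winner the truthful utility reduces to $\frac{s_i c_{k+1}}{s_{k+1}}-c_i\geq 0$ by the greedy weight-ordering $\frac{s_i}{c_i}\geq\frac{s_{k+1}}{c_{k+1}}$, with $p_i^2$ vanishing on the truthful path. You are merely more explicit than the paper about the vanishing of $p_i^2$ and about the quantification of constraint~\ref{ir}, which is a welcome clarification rather than a different argument.
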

\begin{proof}
We need to show that when players act truthfully, their utilities are greater than or equal to zero. If the player is not a winner, then its utility is zero. If the player is a winner then its utility is determined as follows.
\begin{align*}
u_{i}(c_{i}, s_{i})=\frac{s_{i} c_{k+1}}{s_{k+1}}-c_{i}
\end{align*} 
Given that $\frac{s_i}{c_i}>\frac{s_{k+1}}{c_{k+1}}$, the utility of the winner is positive.
\end{proof}
\subsection{Task allocation}

We now propose the task allocation policy among the $k$ selected players to reduce the job completion time using different models of compute jobs. 

\textit{Single job}: In some applications such as audio to text conversion, we only have one job. In such a single job scenario, the workload assigned to each player is proportional to its committed service quotient (in the mobility aware approach in the next section, we also consider the auction time interval in the task allocation towards ensuring that the selected nodes are able to complete the assigned task before the distributor preforms the next auction). For example, for player $i$ the amount of workload is equal to $D \frac{s_i}{\sum_{i=1}^k s_i}$, where $D$ is the total workload. In this case, all selected players execute the assigned tasks in parallel that reduces the job completion time. Any other task allocation, reduces the number of tasks executing in parallel and consequently increases the job completion time.

\textit{ DAG jobs}: There are some applications that contain a set of jobs where the execution of jobs have causal ordering. We use a Directed Acyclic Graph (DAG) to represent causal dependencies in a set of jobs. In our DAG, nodes correspond to jobs and directed links represent the causal dependencies. Figure~\ref{fig_dag} shows an example of a DAG representing a set of jobs.

We can use the same task allocation as the single job for DAG jobs. For dependent jobs, we execute jobs sequentially and assign workloads to the selected $k$ players in proportion to their committed service quotients. For independent jobs, we can run them in parallel. For example, job $B$ and $C$ in figure~\ref{fig_dag} are independent and can be run in parallel. Let $D_1$, and $D_2$ be the total workloads of jobs $B$, and $C$, respectively. Then the assigned workload to player $i$ is $D_1 \frac{s_i}{\sum_{i=1}^k s_i}+D_2 \frac{s_i}{\sum_{i=1}^k s_i}$.

\textit{Indivisible jobs}: In some applications, jobs are indivisible. Even for indivisible jobs, the proposed game theoretic framework can be used to provide incentives, utilize nearby resources, and reduce the job completion time.

For a single indivisible job, the distributor selects only one nearby device and assigns the whole job if the job execution time is less than the time taken in executing the job locally. For indivisible DAG jobs, the distributor acts in the same manner as it would for the single indivisible job scenario, for all dependent jobs and executes them sequentially. For all independent jobs, the distributor executes them in parallel. First, it selects $k$ nearby smart phones where $k$ is equal to the number of independent jobs, by considering the budget limit constraint. For task allocation, the distributor ranks the independent jobs in the following manner. For each independent job, the distributor computes the total remaining job load from the node corresponding to the independent job to the leaf of the DAG. Then, it assigns ranks to the independent jobs in decreasing order of the remaining load. Next, the highest rank job is assigned to the selected player with the highest value of committed service quotient. Let $50$, $100$, $100$, $50$ be the workload of jobs $B$, $C$, $D$, and $E$, respectively in Figure~\ref{fig_dag}. Then the total remaining job loads from $B$, and $C$ are $200$, and $150$, respectively. Therefore, in allocating tasks, the distributor assigns $C$ to selected player with the highest committed service quotient.

Although our game theoretic framework can be used for task allocation where jobs are indivisible, it is likely to have lower efficiency compared to local execution, in terms of job completion time. Specially, when we have computationally intensive indivisible jobs in an environment with high mobility (i.e., nearby mobile nodes are disconnected before completing the assigned task). Note that for divisible jobs, we deal with mobility by preforming multiple auctions with dynamic time intervals (see section \ref{secmobility}).  

\begin{figure}[!t]
\centering
\includegraphics[width=2.5in]{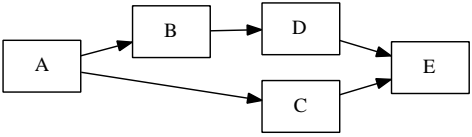}
\caption{Example of DAG jobs.}
\label{fig_dag}
\end{figure}

\section{Mobility aware approach}\label{secmobility}

So far, we have implicitly assumed that the mobile devices are available for entire duration of the task computations. However, because of mobility of the distributor or the players, the topology of network may change and consequently, some mobile smart phones may get disconnected before completing the assigned tasks. Furthermore, some other smart phones might arrive in the vicinity of the distributor with higher computational power and less cost. To deal with such device mobility, instead of holding the auction only once, we hold the auction multiple times. The key challenge in holding the auction multiple times is the determination of the time interval between auctions.

When we perform the auction very frequently, we are able to find newly arriving smart phones. Note that the duration of the task allocated to the smart phones is limited by the time between auctions. Therefore, when the time between auctions is short, the chance that a smart phone gets disconnected before completing its assigned task will decrease. However, the number of auctions and the accompanying overheads will increase, since each time the distributor needs to probe to find the neighboring smart phones and their type values. On the other hand, if we hold the auction less frequently, then there is a higher chance that smart phones get disconnected before performing the assigned tasks. Also, we may lose potential computational power of newly arriving smart phones. As a result, we may see higher job completion times. Therefore, there is a trade-off between the overhead due to the number of times that we perform the auction and the job completion time. 

\begin{figure}[!t]
\centering
\includegraphics[width=3in]{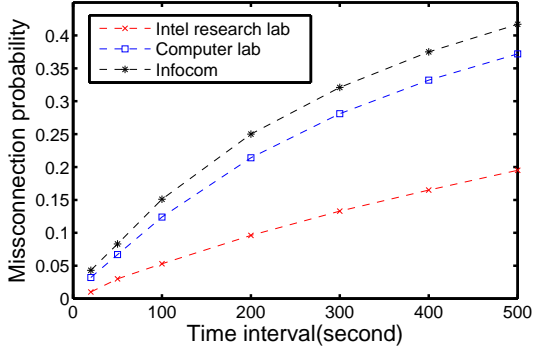}
\caption{ Average percentage of disconnected nodes versus the time interval between auctions.}
\label{fig_missconectionprob}
\end{figure}

Figure \ref{fig_missconectionprob} shows the average percentage of disconnected nodes versus the time interval between two auctions (or distributor probes) for three different real world contact traces. The three real contact traces were collected as a part of the Haggle Project~\cite{cambridge-haggle-2006-01-31} at the Intel research lab, at the University of Cambridge Computer Laboratory, and during the \textit{Infocom05} conference, respectively. This figure shows that in all three experiments, as we increase the time interval from $20$ to $500$ seconds, the average percentage of disconnected nodes increases, consequently, increasing the job  completion time in our task allocation mechanism.

Instead of using a fixed time interval between auctions, we propose an adaptive approach that modifies the time interval dynamically based on the mobility of the nodes in the environment. Our goal is to reduce the number of times we preform the auction without significantly increasing the job completion time.

We use AIMD (Additive Increase and Multiplicative Decrease) approach to adaptively set the time interval between two auctions. We use AIMD for two reasons. First, in real world scenarios, the change in the number of contacts has been shown to follow a power law distribution with a bursty traffic pattern~\cite{slaw}, i.e. a large number of contacts arrive or leave a specific place over a short period of time. This implies that the time interval between two auctions, $T$, should decrease rapidly when a change is observed in the number of connections. Second, AIMD is a stable adaptive approach that is widely used in networking protocols such as TCP. 

We describe our adaptive approach as follows. We start with a fixed time interval, $T$. Then, we update the value of $T$ by comparing the results from the previous auction. If no change in the number of connections is observed, we conservatively increase the value of $T$ linearly (by a constant value of $b$ seconds). Otherwise, we decrease the value of $T$ multiplicatively by a factor of 2. Therefore, the value of $T$ increases slowly when there is no change, and it decreases rapidly in case of change in the number of connections from one auction to the next. More specifically, the value of $T$ is obtained from the following formula.
\begin{equation}\label{eqadaptiveinc}
T(i+1)=\left\{
\begin{array}{@{} l c @{}}
\max (\frac{T(i)}{2}, T_{min}) &Nc_{i}\neq Nc_{i+1}\\
\min(T(i)+b, T_{max}) &\text{otherwise}
\end{array}\right.
\end{equation}
where, $T_{min}$ and $T_{max}$ are the minimum and the maximum values of $T$, respectively. $Nc_{i}$ denotes the contact set at the beginning of time interval $i$. 

\section{Privacy}\label{secprivacy}
Another important issue in MCC is the privacy of data. In our game theoretic framework the distributor violates the privacy of data by offloading a computational job to the nearby mobile devices. Let $j=\{1, \ldots, m\}$ denote a set of computational jobs that the distributor wants to execute. We formulate the privacy risk for distributor, $PR(d)$, as the following:
\begin{equation}\label{equprivacyrisk}
PR(d)=\sum_{j=1}^m \rho_j v_j
\end{equation}
$\rho_j\in [0, 1]$ in equation~\ref{equprivacyrisk} determines the sensitivity of the computational job $j$. $v_j$ represents the visibility of job $j$ and is equal to the number of selected mobile devices by the distributor for executing tasks of job $j$.

As the above formulation shows, the privacy risk for the distributor, $PR(d)$, increases by increasing the number of selected mobile devices. However, the job completion time decreases as the number of selected mobile devices is increased. Figure~\ref{figprivacydis} shows the tradeoff between $PR(d)$ and the job completion time. Note that the distributor can provide privacy by adding another constraint for privacy risk in both allocation and payment mechanism at the cost of increasing the job completion time.

Besides distributor, the nearby mobile devices also violate their privacy by revealing their computational capabilities and the cost of computation. To examine the impact of privacy on the nearby mobile users utilities, we define the privacy parameter $x\in [0, 1]$ where the mobile device conceals the actual value of cost, $c$, and the actual value of service quotient, $s$, by a factor $x$. I.e, $c'=x c$ and $s'=\frac{1}{x} s$ (we only consider the case where the mobile device decreases the actual value of cost and increases the actual value of service quotient. For all the other cases, it is obvious that the utility of mobile device decreases by concealing the actual values of cost and service quotient). Figure~\ref{figprivacymobile} shows the CDF of nearby mobile devices' utilities for different values of $x$. This figure shows that the mobile devices can provide privacy at the cost of decreasing their utilities and the highest utility is obtained if the mobile devices act truthfully, $x=1$, and reveal their actual costs and computational capabilities as we proved in Section~\ref{subsecproofs}. 
\begin{figure*}[!t]
\centerline{
\subfigure[]{\label{figprivacydis}\includegraphics[width=3in]{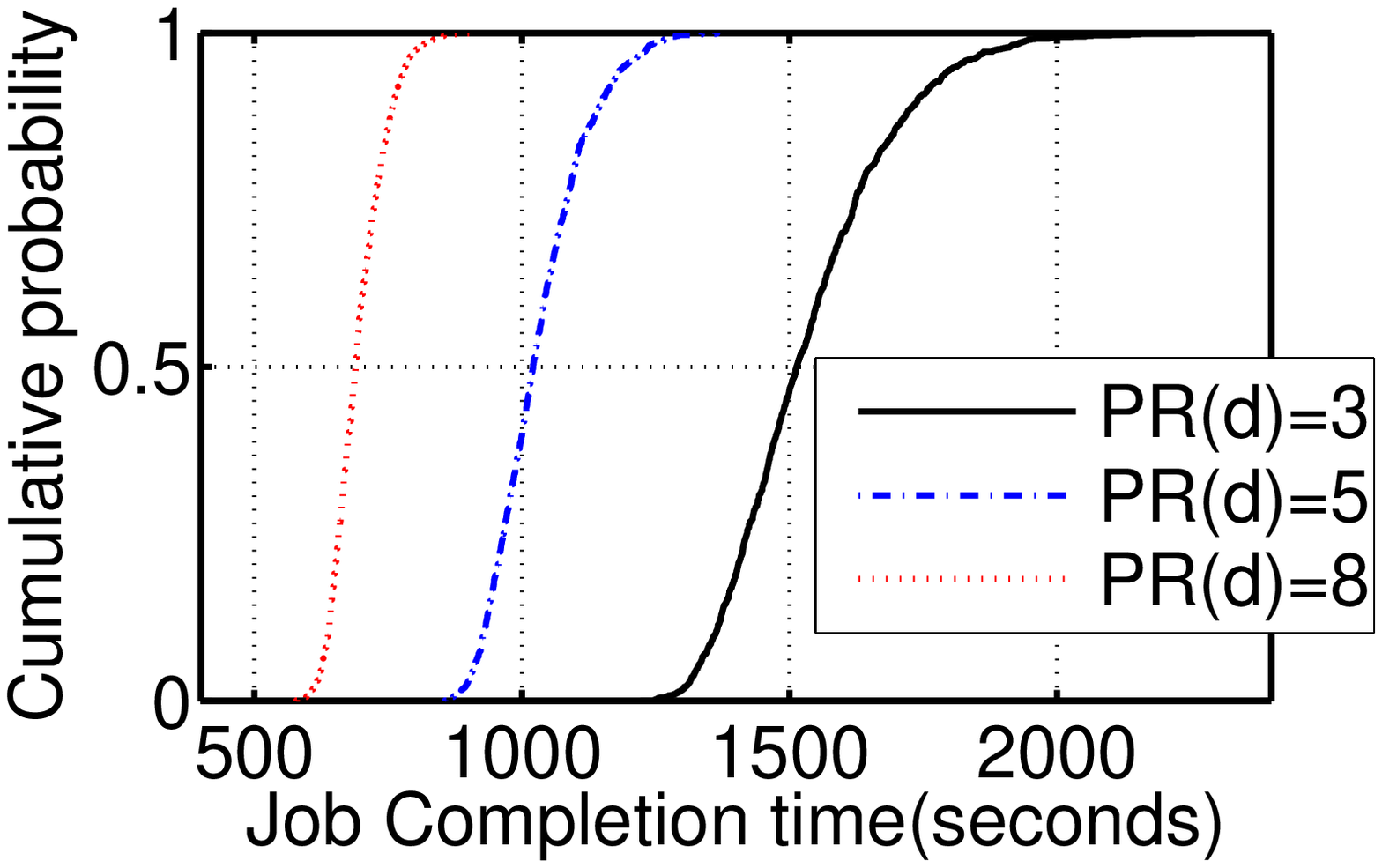}}
\hfil
\subfigure[]{\label{figprivacymobile}\includegraphics[width=3in]{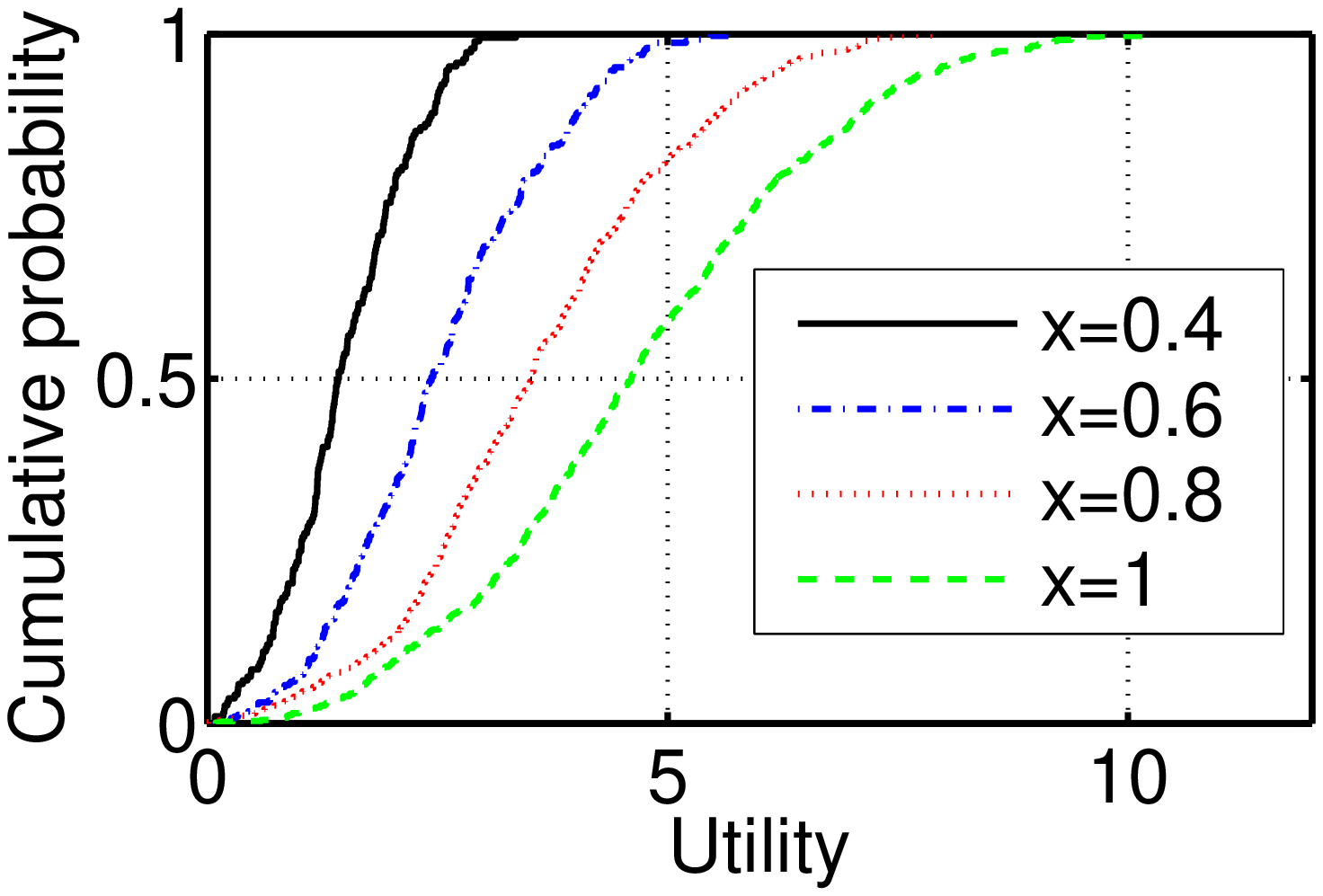}}}
\caption{(a) CDF of job completion time for different values $PR(d)$ (b) CDF of utility of mobile devices for different values of privacy parameter, $x$.}
\label{fig_privacy}
\end{figure*}
\section{Evaluation}\label{secevaluation}

\begin{table*}[!t]
\caption{Comparison of the mobility aware approach with fixed and adaptive time intervals between auctions.}
\label{table_compare}
\centering
\resizebox{17.5cm}{!}
{\begin{tabular}{c|c c c|c c c|c c c}
\hline\hline
&\multicolumn{3}{ |c| }{Intel research lab} & \multicolumn{3}{ |c| }{Computer lab} & \multicolumn{3}{ |c }{Infocom}\\
\hline
Approaches & Completion& Number of & Percentage of &Completion& Number of& Percentage of & Completion & Number of & Percentage of \\
& time(second)& auctions& disconnected nodes & time(second)& auctions& disconnected nodes & time(second)& auctions& disconnected nodes \\
\hline
$T=20s$ & 2109 & 119 & 0.04 & 2100 & 119 & 0.054 & 2632 & 153 & 0.066\\
\hline
$T=200s$ & 2646 & 15 & 0.247 & 3021 & 18 & 0.4 & 3776 & 23 & 0.43\\
\hline
$T=500s$ & 3288 & 7 & 0.267 & 3718 & 10 & 0.5 & 4725 & 12 &0.57\\
\hline
Heuristic approach & 2337 & 19 & 0.16 & 2294 & 33 & 0.16 & 2997 & 43 &0.2\\
\hline
\end{tabular}}
\end{table*}

\begin{figure*}[!t]
\centerline{
\subfigure[]{\label{fig_100MB}\includegraphics[width=2.5in]{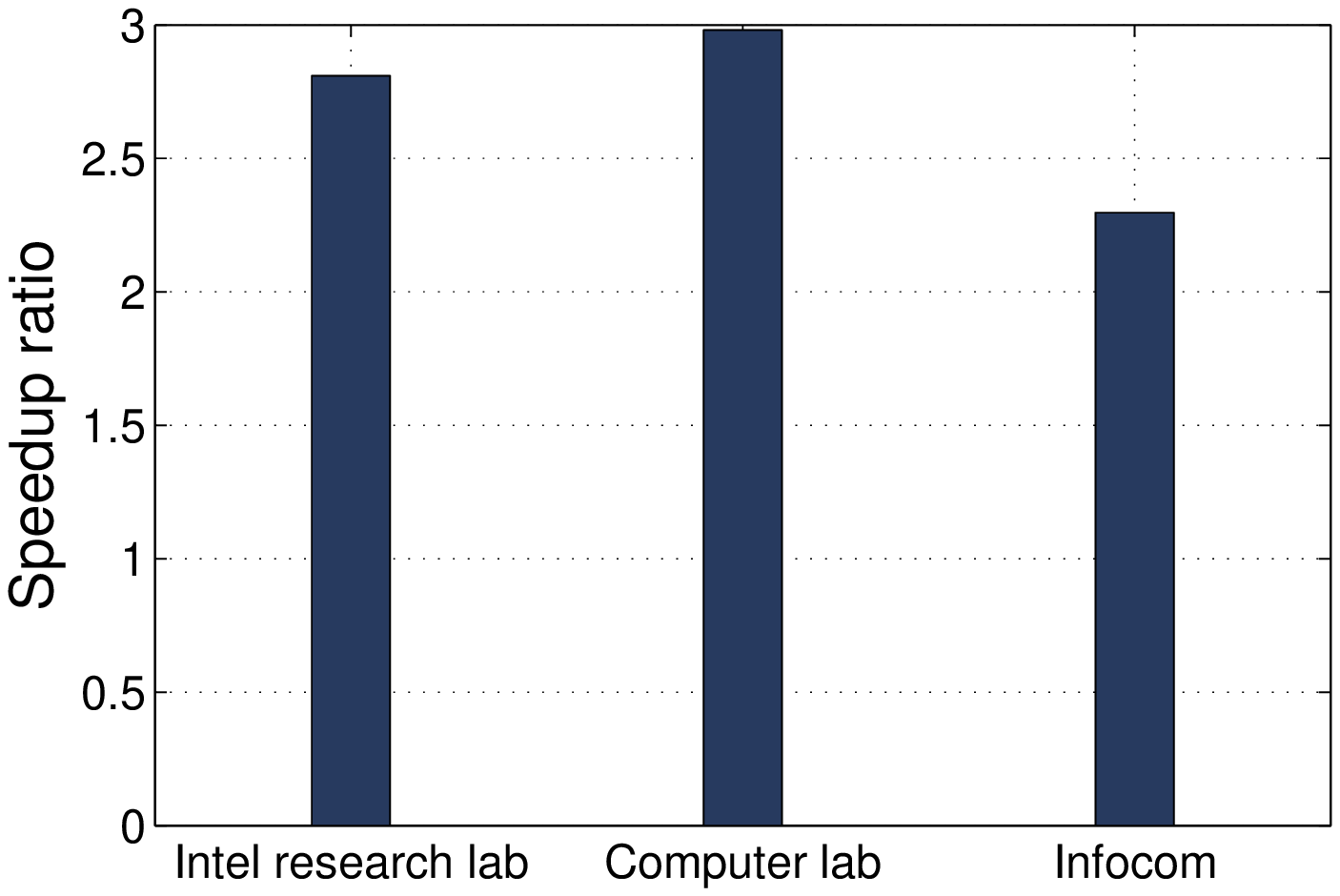}}
\hfil
\subfigure[]{\label{fig_300MB}\includegraphics[width=2.5in]{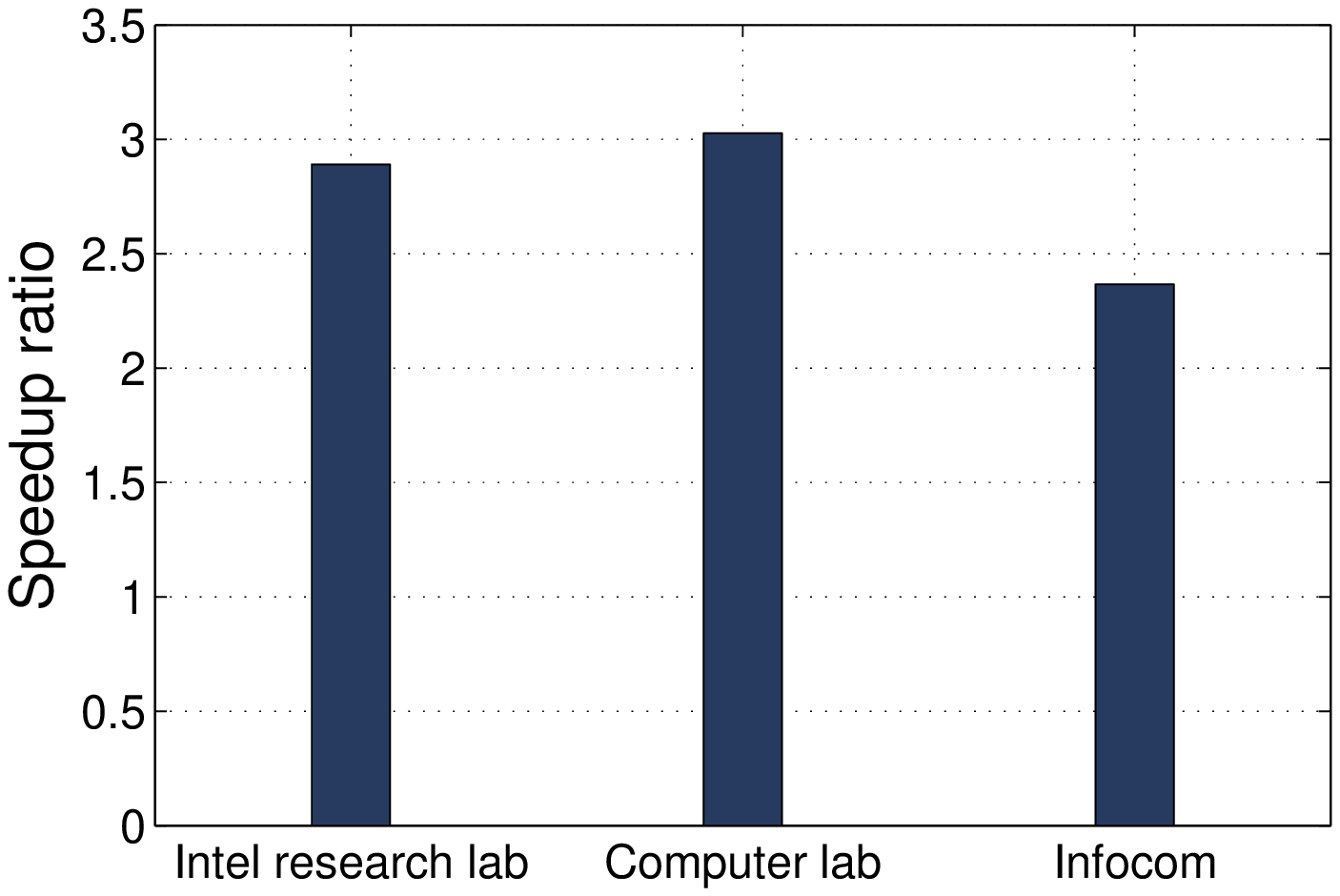}}
\hfil
\subfigure[]{\label{fig_600MB}\includegraphics[width=2.5in]{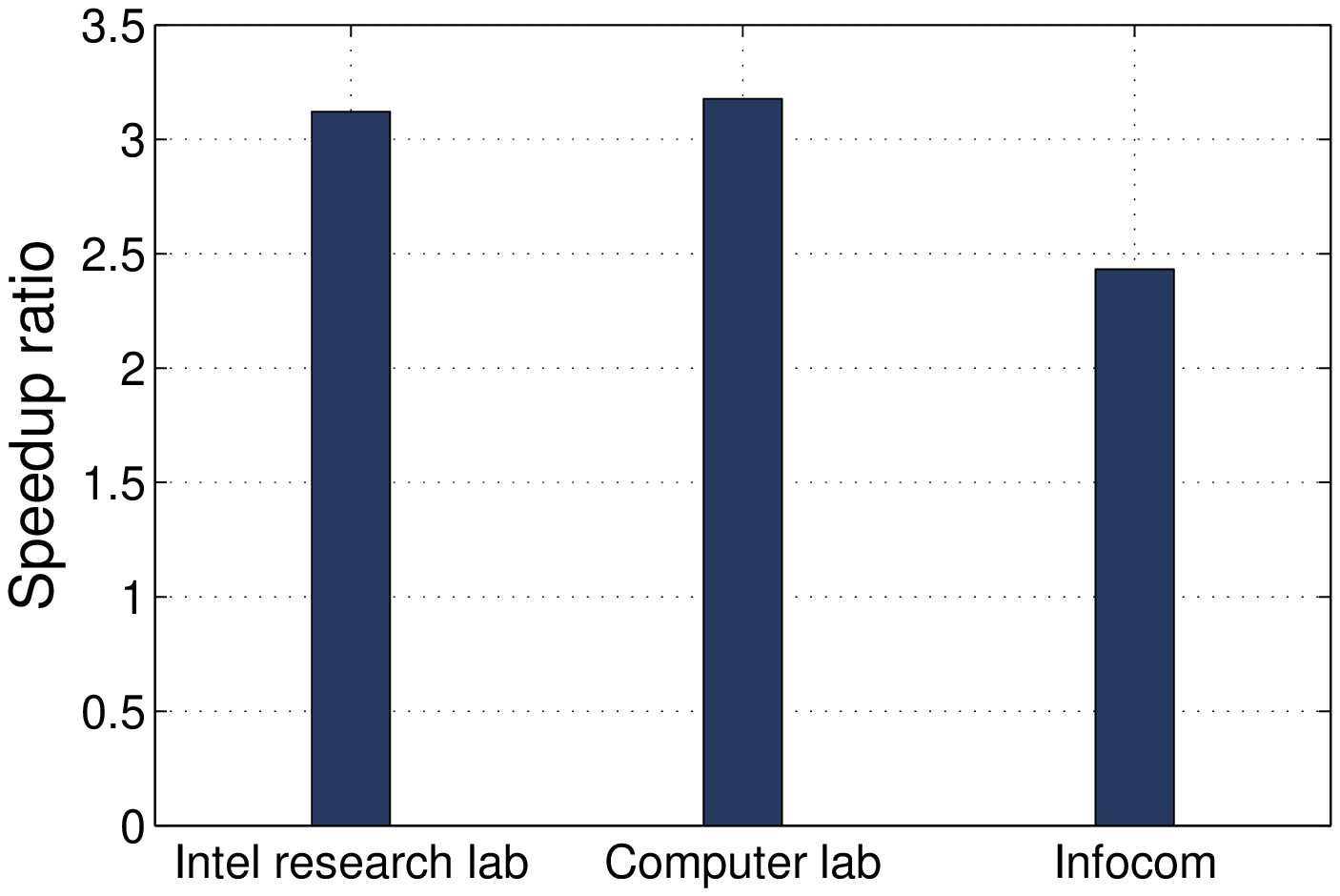}}}
\caption{Speedup ratio of mobility aware approach using our heuristic to adaptively select the time interval between auctions for different file sizes, (a) $100MB$, (b) $300MB$, (c) $600MB$.}
\label{fig_barmobilityaware}
\end{figure*}
\subsection{Experimental Setup}
To evaluate our task allocation method and its overhead, we use a custom simulator, that we develop using Matlab, to simulate the speech to text conversion application under different mobility conditions. Instead of performing the conversion at the distributor, the audio file of the speech is divided into smaller files and nearby smart phones are used to convert the small audio files to text files and send back the results to the distributor who finally combines them. 

In our simulation, each player chooses its cost of processing the audio file ($c$) from a uniform distribution in interval $[1, 5]$. The size of of audio file that a player can process per second ($s$) is selected randomly from a uniform distribution $[50KB, 125KB]$. The distributor's budget limit function ($B(u_d)$) is a constant function $50$ per megabyte of data for the distributor utility $u_d$ to be positive (note that other functions are also possible). Also, $T_{min}$, $T_{max}$, that represent the maximum and minimum values for the time interval between auctions, are set to $20s$, and $500s$ respectively. Finally, the linear increase factor in the heuristic approach ($b$) is set to $20s$ in all experiments. All the results are obtained by averaging over $1000$ randomly selected samples.  

We evaluate our task allocation method for two scenarios. In our first scenario, we assume that the neighboring nodes are available during the whole computation. In our second scenario, we consider the mobility of nodes and the possibility of disconnection during computations. To evaluate the mobility aware approach in the second scenario, we use three real contact traces with different contact properties. These traces were collected as a part of the Haggle Project~\cite{cambridge-haggle-2006-01-31}. The first data set was collected at Intel research lab, the second set was collected at the University of Cambridge Computer Laboratory, and the third set was collected during the Infocom05 conference. 
We select $9$ nodes from the first set and $12$ nodes from the second and third sets.
We also use two synthetic mobility models namely the Slaw mobility model~\cite{slaw} and the Random Walk mobility model(RW)~\cite{khaledimobility2}. The Slaw model captures many of the statistical properties of human mobility, while the Random Walk model provides high mobility and abrupt changes in movement patterns~\cite{khaledimobility2}. 

\subsection{Results}
We evaluate our task allocation method in terms of both the job completion time and the overhead. To quantify the overhead, we measure the number of times that the auction is preformed before task completion. We also use the speedup ratio metric to evaluate the speed up in the job completion time as a result of offloading tasks to other smart phones. The speedup ratio is obtained by dividing the job completion time in locally executing the job by the job completion time in our distributed approach. The speedup ratio shows the performance benefits where we provide the required incentives for the nearby mobile devices to lend their resources with the case where there is no incentive model and selfish nearby mobile devices would not be willing to lend their resources. In the other words, the speedup ratio demonstrates the performance benefits of the distributor from using the game theoretic framework.

Figure \ref{fig_static_bar} shows the speedup ratio for the first case where selected neighboring nodes are available during the whole computation. We use three audio files with sizes of $100MB$, $300MB$, and $600MB$. The number of neighboring nodes are randomly selected from a power law distribution. Figure \ref{fig_static_bar} shows how our method significantly improves the job completion time in comparison to the time taken in locally executing the job. This figure shows that for all of our chosen file sizes, our method can improve the job completion time by about a factor of $5$. 
  
Table \ref{table_compare} shows the job completion time, number of preformed auctions, and the percentage of disconnected nodes in case of mobility for the three real contact traces. In this table, we compare the results of fixed time intervals with our heuristic approach for adaptively selecting the time intervals between auctions. In this experiment, the file size is set to $600MB$. We make the following observations from Table \ref{table_compare}. First, as the time interval between two auctions ($T$) increases or the number of performed auctions decreases, the job completion time and the percentage of disconnected nodes increase. By increasing the time interval, the percentage of disconnected nodes increases and the distributor also misses some of the newly arriving nodes. This reduces the number of tasks that are being executed in parallel, and consequently the job completion time increases. Second, the fixed time interval approach performs well in terms of job completion time for small values of $T$. For large values of $T$, this approach preforms well in terms of overhead, i.e., the number of performed auctions. Thus, finding the right value for $T$ where it preforms well in terms of both the job completion time and the overhead depends on the mobility patterns of mobile nodes. Third, the heuristic approach preforms well in terms of both the job completion time and the overhead without requiring any prior knowledge of the mobility patterns of mobile nodes.

Figure \ref{fig_compare}, compares the performance of our heuristic approach for finding time intervals between auctions with the fixed time intervals approach, when nodes are mobile. In this figure, the x-axis represents the normalized job completion time, and the y-axis represents the normalized number of preformed auctions. Our goal is to reach to the left bottom corner of the figure where minimum job completion time and minimum overhead exist. Figure \ref{fig_compare} shows that the heuristic approach outperforms the fixed intervals method and gets close to the left bottom corner.

Next, we compare the job completion time of the mobility aware approach with that of local execution using the three real world contact traces for different file sizes. Figure~\ref{fig_barmobilityaware} shows the speedup ratio of the mobility aware approach using our heuristic approach. This figure shows that the speedup ratio is close to $3$. However, the amount of improvement in the Infocom05 contact trace is a less than the other contact traces.  This is because the average percentage of disconnected nodes in the Infocom05 trace is larger than the other traces (see Figure \ref{fig_missconectionprob}).

To further analyze the effect of contact traces, we use two synthetic mobility models to generate the contact traces. Specifically, we use the Slaw~\cite{slaw} and Random Walk mobility models. In our setting, we assume that 20 mobile nodes move in an area of $1000m \times 1000m$ for one hour and each mobile node has a circular transmission range of $150m$. Also, the audio file size is set to $600MB$. 

Figure~\ref{fig_rwslaw} shows the speedup ratio of mobility aware approach with fixed and heuristic approaches for selecting the time interval between auctions in the Slaw and Random Walk models. This figure shows that our proposed model can improve the job completion time by a factor of $5$ to $7$ in the Slaw model (that mimics the human walk patterns). However, the speedup ratio of the mobility aware approach is at most $3$ in the Random Walk model. This is because, in the Random Walk model, mobile nodes continuously move from one location to the another in a random fashion. According to our measurements in the Random Walk model, more than half of the selected neighboring nodes get disconnected before completing their tasks, consequently increasing the job completion time. However, we can still improve the job completion time by a factor of $3$ for small fixed time interval ($T=20s$) at the cost of increased auction overhead. Furthermore, we can improve the job completion time by factor $2$ using our heuristic approach without significantly increasing the communication overhead. According to our experiments, the number of preformed auctions using the heuristic method and using a small fixed time interval ($T=20s$), are $59$ and $110$, respectively. It should be noted that for a high mobility contact trace (such as Random Walk), there is a higher change of contact among mobile nodes. Therefore, we can still improve the job completion time. Also, it is worth mentioning that by reducing the time interval between auctions or equivalently reducing the workloads of the assigned tasks, we can further improve the speedup ratio, but at the cost of increased overhead.

To understand the impact of the number of nearby nodes, we vary the number of mobile nodes between $10$ and $50$ with increments of $10$, in the Slaw model. Figure \ref{fig_slaw_exetime} shows the job completion time of the proposed approach using the Slaw mobility model as a contact trace. This figure shows that as the number of nodes increases, the job completion time decreases by more than $40\%$, when fixed time intervals ($T$) are used. Job completion time reduces from $1486$ to $833$ seconds for $T=20$ seconds, from $1888$ to $1055$ seconds for $T=200$ seconds, and from $2238$ to $1324$ for $T=500s$. In addition, this figure shows that our heuristic approach preforms well in terms of job completion time. 
 
Figure \ref{fig_compare_slaw} shows both the job completion time and the overhead (number of preformed auctions) using the Slaw mobility model with different number of nodes. This figure shows that for different number of nodes in the Slaw mobility model, the heuristic approach preforms significantly better than the fixed time intervals method, in terms of both the job completion time and the overhead. 

Finally, we evaluate our methods for DAG jobs. The structure of jobs is the same as Figure~\ref{fig_dag} and the workloads of jobs $A$, $B$, $C$, $D$, and $E$ are $100$, $50$, $100$, $100$, $50$ respectively. Figure~\ref{fig_dagbar} shows the speedup ratio of the mobility aware approach for DAG jobs, using the heuristic method to select the time intervals. This figure shows that our proposed approach improves the job completion time by factor $2.5$ to $3$ for DAG jobs. DAG jobs have the same behaviour as the single job by varying the number of nodes or using different mobility models. 

\begin{figure}[!t]
\centering
\includegraphics[width=3in]{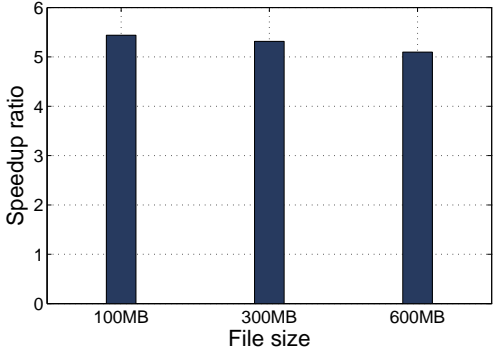}
\caption{Speedup ratio without considering mobility for different file sizes.}
\label{fig_static_bar}
\end{figure}

\begin{figure}[!t]
\centering
\includegraphics[width=3in]{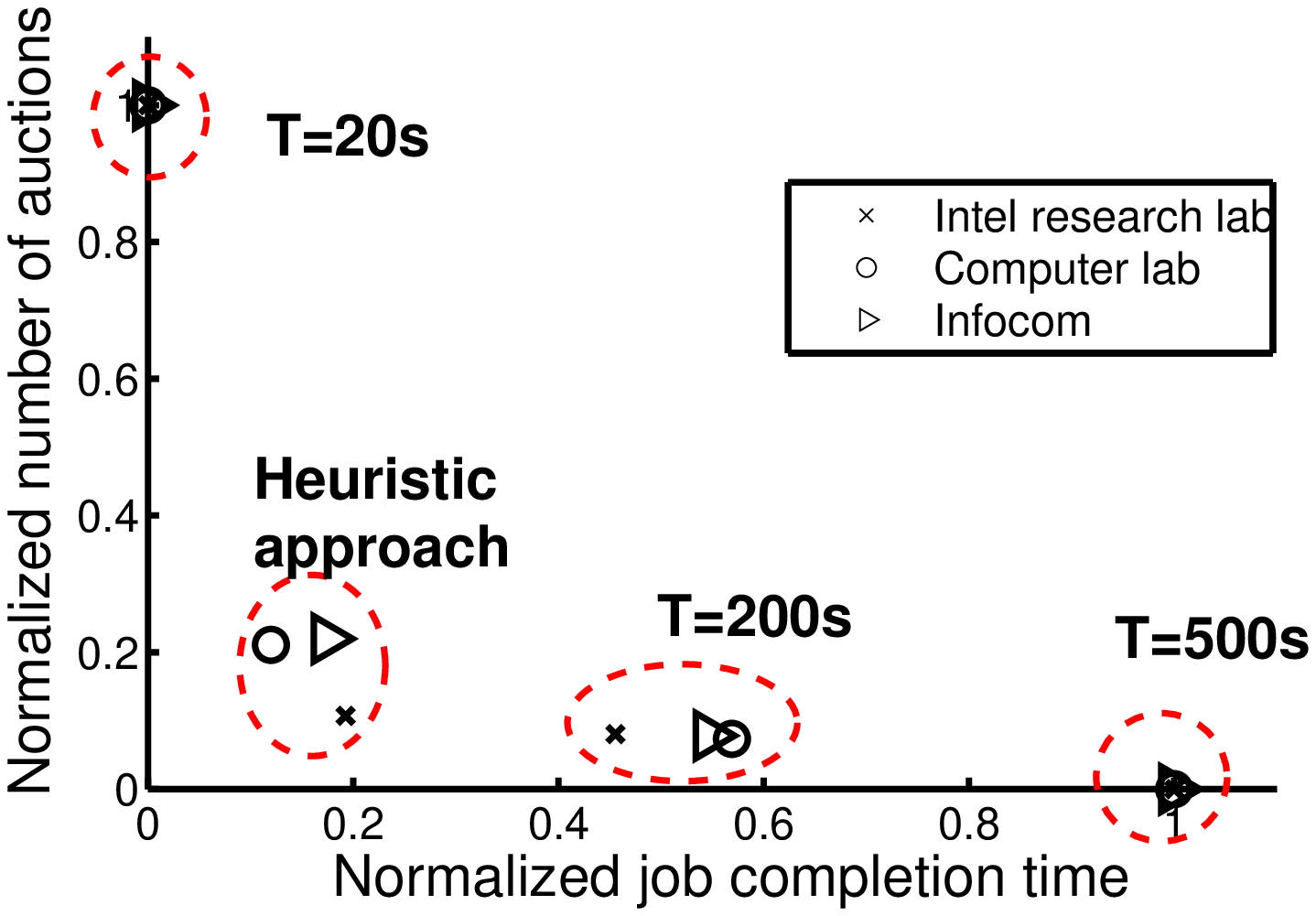}
\caption{Comparison of approaches with fixed and heuristic-based time intervals between auctions, in the presence of mobility.}
\label{fig_compare}
\end{figure}

\begin{figure}[!t]
\centering
\includegraphics[width=3in]{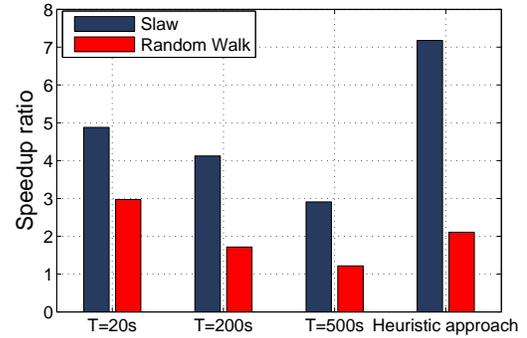}
\caption{Speedup ratio of mobility aware approach using fixed and heuristic-based time intervals between auctions in Slaw and Random Walk models.}
\label{fig_rwslaw}
\end{figure}

\begin{figure}[!t]
\centering
\includegraphics[width=3in]{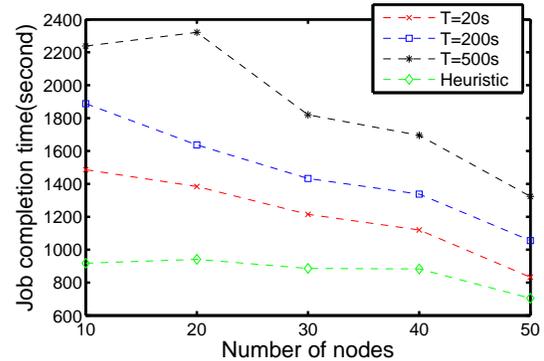}
\caption{The job completion times for different values of $T$ and the heuristic approach versus the number of nodes in the Slaw mobility model.}
\label{fig_slaw_exetime}
\end{figure}
\begin{figure}[!t]
\centering
\includegraphics[width=3in]{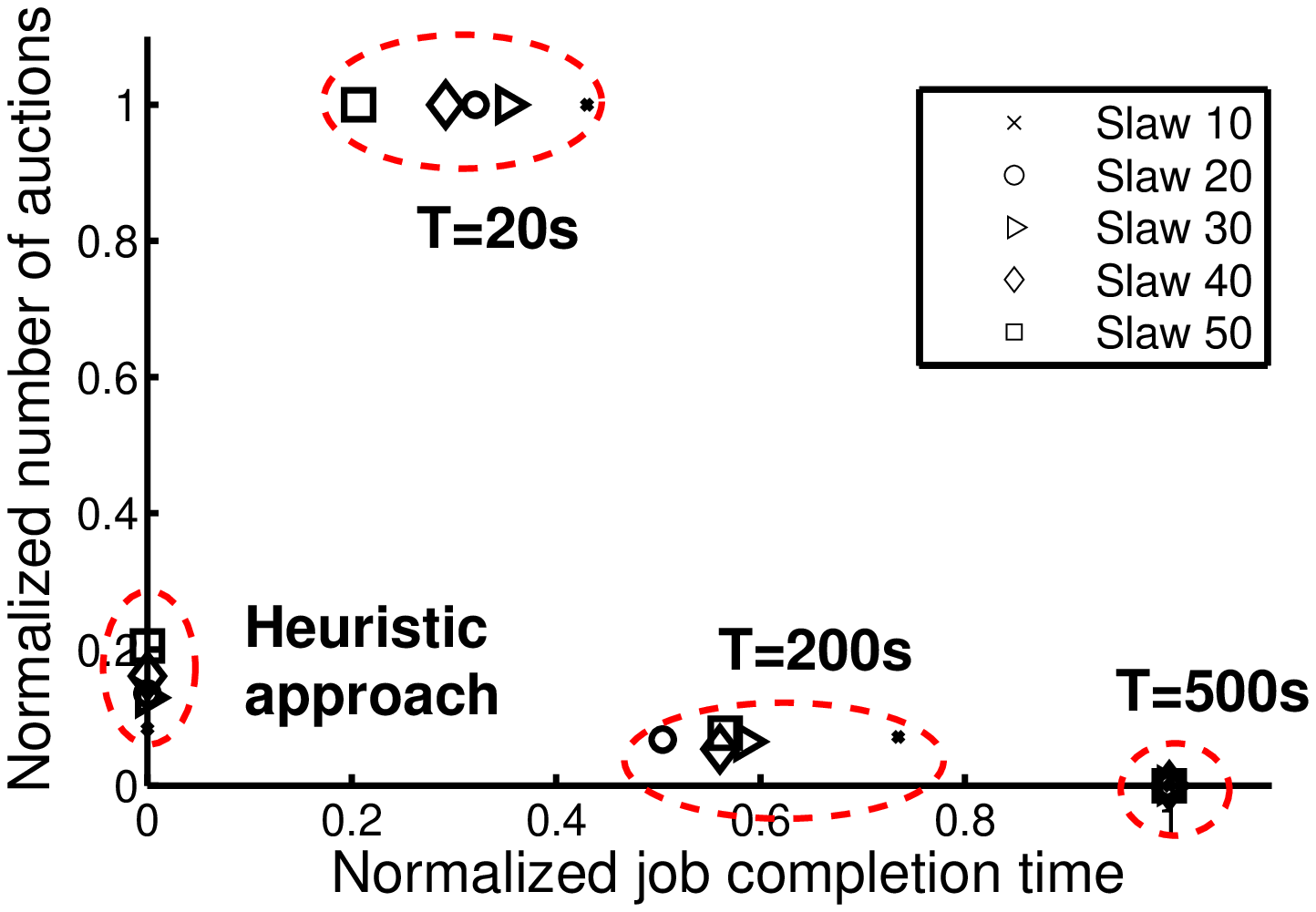}
\caption{Comparison of approaches with fixed and heuristic-based time intervals between auctions in the presence of mobility based on the Slaw mobility traces.}
\label{fig_compare_slaw}
\end{figure}

\begin{figure}[!t]
\centering
\includegraphics[width=3in]{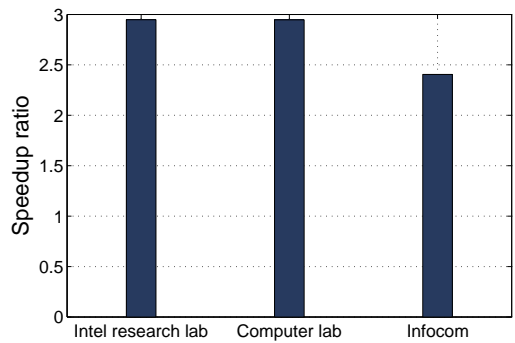}
\caption{Speedup ratio of mobility aware approach using our heuristic for selecting the time interval between auctions in DAG jobs for three real contact traces.}
\label{fig_dagbar}
\end{figure}

\section{Conclusion}\label{secconclu}
We presented and evaluated a game theoretic framework for task allocation in mobile cloud computing environments comprising of selfish mobile devices. Specifically, we proposed a multidimensional auction for allocating the tasks of a job among nearby mobile nodes based on their computational capabilities and also the cost of computation at these nodes with the goal of reducing the overall job completion time and be beneficial to all the parties involved. We considered node and task heterogeneity as well as node mobility in developing our methods. Our evaluations demonstrated the benefits of our methods.
\end{document}